\documentclass[12pt]{article}

\usepackage{fullpage}
\usepackage{cancel}
\usepackage[T1]{fontenc}
\usepackage{ae, aecompl}
\usepackage{graphicx}
\usepackage{amssymb}
\usepackage{amsmath}
\usepackage{subfigure}
\usepackage{enumitem}
\usepackage[round,authoryear]{natbib}
\usepackage{color}
\usepackage{array}
\usepackage{rotating}
\usepackage{colortbl}
\usepackage{tikz}
\usepackage{titling}
\usetikzlibrary{patterns}

\definecolor{webgreen}{rgb}{0,0.4,0}
\definecolor{webbrown}{rgb}{0.6,0,0}
\definecolor{purple}{rgb}{0.5,0,0.25}
\definecolor{darkblue}{rgb}{0,0,0.7}
\definecolor{darkred}{rgb}{0.7,0,0}
\definecolor{darkgreen}{rgb}{0,0.7,0}
\usepackage[pdfborder=false]{hyperref}
\hypersetup{colorlinks,citecolor=darkred,filecolor=black,linkcolor=darkblue,urlcolor=webgreen,pdfpagemode=none, pdfstartview=FitH}
\newcommand{\ignore}[1]{}
\newtheorem{lemma}{{\sc Lemma}}

\newtheorem{cor}{{\sc Corollary}}
\newtheorem{theorem}{{\sc Theorem}}
\newtheorem{defn}{{\sc Definition}}

\newtheorem{step}{{\sc Step}}

\newtheorem{fact}{{\sc Fact}}

\definecolor{webgreen}{rgb}{0,0.4,0}
\definecolor{webbrown}{rgb}{0.6,0,0}
\definecolor{purple}{rgb}{0.5,0,0.25}
\definecolor{darkblue}{rgb}{0,0,0.7}
\definecolor{darkred}{rgb}{0.7,0,0}
\definecolor{darkgreen}{rgb}{0,0.7,0}

\newenvironment{proof}{\noindent {\bf \sl Proof\/}:\enspace}
{\hfill $\blacksquare{}$ \vspace{12pt}}
\newenvironment{proof*}{\noindent {\sl Proof\/}:\enspace}
{\hfill $\square{}$ \vspace{12pt}}
\usepackage{sectsty}

\def\qed{\hfill $\Box$}

\begin{document}
\title{{\bf Equity in auction design with
unit-demand agents and non-quasilinear preferences}~\thanks{We are grateful to Anna Bogomolnaia, Masahiro Kawasaki, Herv\'{e} Moulin, Shunya Noda, Daisuke Oyama, Yuya Wakabayashi, and Yu Zhou. We gratefully acknowledge financial support from the Joint
Usage/Research Center at ISER, Osaka University and the Japan Society for the Promotion of Science (Kazumura, 14J05972; Serizawa, 15J01287, 15H03328, 15H05728). Debasis Mishra acknowledges hospitality and support of ISER, Osaka University.}}

\author{Tomoya Kazumura,~Debasis Mishra, and~Shigehiro Serizawa~\thanks{Kazumura: Graduate School of Economics,
Kyoto University, \texttt{pge003kt@gmail.com}; Mishra: Indian Statistical Institute, Delhi, \texttt{dmishra@gmail.com};
Serizawa: Faculty of Economics, Osaka University of Economics, \texttt{serizawa.8558@gmail.com}}}

\maketitle

\begin{abstract}
We study a model of auction design where a seller is selling a set of objects to a set of agents who can be assigned
no more than one object. Each agent's preference over (object, payment) pair need not be quasilinear.
If the domain contains all classical preferences,
we show that there is a unique mechanism, the minimum Walrasian equilibrium price (MWEP) mechanism, which is strategy-proof, individually rational, and satisfies equal
treatment of equals, no-wastage (every object is allocated to some agent), and
no-subsidy (no agent is subsidized). This provides an equity-based characterization of the MEWP mechanism, and complements the efficiency-based characterization of the MWEP mechanism known in the literature.
\end{abstract}

\newpage

\section{Introduction}
\label{sec:intro}

Many auctions in practice are conducted by public bodies such as governments, local authorities, and public institutions.
Prominent examples include spectrum auctions and other public allocation problems involving exclusive rights, such as the allocation of ownership or development rights.
In these contexts, favoring particular firms or individuals is neither permissible nor desirable, and is likely to attract public criticism.\footnote{
For instance, in the context of spectrum auctions, fairness is recognized as an important criterion for choosing auction formats \citep{Mcmillan95,Kwerel01}.}
Thus, in addition to standard incentive and participation constraints, fairness can be viewed as an additional constraint in such design problems.

We consider an environment in which there are multiple heterogeneous objects with unit-demand agents.
Preferences of agents may not be quasilinear, and thus, may exhibit income effects.
There are various notions of fairness, and the appropriate fairness notion may depend on the application.
To cover as wide a range of applications as possible, we
employ one of the weakest notions of fairness - {\it equal treatment of equals} (ETE). 
This criterion requires that if two agents have identical preferences, then the welfare levels they obtain from the mechanism are identical.
This principle reflects Aristotle's conception
of justice; he writes:\footnote{%
Aristotle, \textquotedblleft\ Politics,\textquotedblright\ Book III: The
theory of citizenship and constitutions, C: The principle of oligarchy and
democracy and the nature of distributional justice, Chapter 9.}

\begin{quote}
Justice is considered to mean equality. It does not mean equality - but
equality for those who are equal, and not for all.
\end{quote}
ETE can also be justified as a necessary condition for standard
fairness desiderata such as no envy and anonymity. 

We impose {\it strategy-proofness} (dominant strategy incentive compatibility) and {\it individual rationality} as incentive constraints.
In addition, we consider two minor properties, no subsidy and no wastage.
{\it No subsidy} requires agents' payments to be nonnegative, which is a standard assumption in auction design.
{\it No wastage} requires that every object be assigned to some agent.\footnote{In our model, we assume that the number of agents is larger than that of objects.}

In our model, the minimum Walrasian equilibrium price (MWEP) mechanism satisfies
equal treatment of equals, strategy-proofness, individual rationality, no subsidy,
and no wastage \citep{Demange85}.\footnote{We focus on deterministic mechanisms. Deterministic auction mechanisms are simple and transparent, and are often preferred by sellers over auctions that involve randomization.} 
Equal treatment of equals is satisfied by a large
class of mechanisms, and there also exist many mechanisms that satisfy
strategy-proofness, individual rationality, no subsidy, and no wastage. 
Nevertheless, we show that if the domain of preferences is sufficiently rich (i.e., contains all
{\em classical} preferences), the MWEP mechanism is the unique mechanism that satisfies equal
treatment of equals, strategy-proofness, individual rationality, no subsidy, and
no wastage (Theorem~\ref{theo:main}).

An important implication of our uniqueness result concerns efficiency.
The MWEP mechanism is efficient. As a consequence, our main result implies that
any mechanism that satisfies equal treatment of equals, strategy-proofness,
individual rationality, no subsidy, and no wastage must be efficient
(Corollary~\ref{cor:eff}). Fairness and efficiency are often viewed as competing
objectives in economics. 
In this environment, however, it is known that the strong fairness requirement of no envy, together with no wastage, implies efficiency \citep{Svensson83}. 
Our result strengthens this insight: when combined with strategy-proofness, individual rationality, and no subsidy, equal treatment of
equals---which is substantially weaker than no envy---together with no wastage 
implies efficiency.


While strategy-proofness, individual rationality and ETE are standard axioms
in mechanism design, we briefly discuss our remaining axioms.~\footnote{We do not consider
interim individual rationality and Bayesian incentive compatibility. This allows
us to work in a prior-free model.} 
No subsidy is satisfied by all standard auction formats.
It can also be justified as a requirement that prevents participation by ``fake'' bidders---agents who have (or report) very low valuations for the objects but seek to derive utility from subsidies.
The no wastage axiom can be motivated as a very weak form
of efficiency. 
It excludes auctions where reserve prices are kept.
In practice, we see many auctions without
a reserve price. 
Though reserve prices are used in many auctions too, often
sellers, such as governments, fail to commit to these reserve prices by reselling the unsold objects (spectrum licenses, for example). Hence, no-wastage
is a reasonable axiom in our model. 

Our result relies heavily on the fact that the auction designer believes that preferences of agents may exhibit income effect, i.e., non-quasilinear. While quasilinearity is a reasonable assumption when payments of agents are small, ignoring income effects in auctions involving large payments (for instance, in selling high-worth natural resources like mines, spectrum etc.) makes economics models unrealistic.
\citet{Bulow2017} mention budget-constraint among bidders (which induces non-quasilinear preferences) as one of the two main reasons why spectrum auctions are complex.
When preferences are quasilinear, there are many mechanisms other than the MWEP mechanism that satisfy our five properties \citep{Ryan16}.
Moreover, our result does not hold even when agents have specific non-quasilinear preferences.
\cite{Kazumura20} show that the MWEP mechanism is not the only mechanism that satisfies our properties if objects are normal goods.
Thus, a key assumption for our result is that the domain contains various preferences with income effects.\footnote{Another important assumption for our result is heterogeneity of objects.
If objects are identical, the MWEP mechanism is not the only mechanism that satisfies the five properties.
This is true both when preferences are quasilinear and when they can be non-quasilinear \citep{Adachi14}.}
While such domain requirement seems to be demanding, this is reasonable for a seller who has no information about the extent and nature of income effect in agents' preferences over transfers, wants to be robust about this aspect of the model.

We study a model with unit-demand agents, i.e., agents can be assigned at most one out of many objects. Though restrictive, this model appears in practice in many settings. The unit-demand assumption can be justified based on institutional restrictions. For instance, while selling team franchises in professional sports leagues, it is common to restrict a buyer to buy at most one franchise; the first spectrum auction in UK restricted each bidder to buy at most one spectrum \citep{Binmore02}. The unit-demand assumption can naturally arise from preferences of agents also. For instance, buyers in public housing markets are usually interested to buy at most one house \citep{Andersson16}.

\section{Related literature}
There is a growing literature on fair mechanisms in auction models. 
While many papers impose fairness requirements stronger than ETE, the class of mechanisms that satisfy ETE and incentive constraints is not well understood. 

When agents are unit-demand bidders, the only mechanism that satisfies no-envy, strategy-proofness, individual rationality, no subsidy, and no wastage is the MWEP mechanism \citep{Svensson83, Sakai13, Morimoto15}.
If objects are identical and preferences are quasilinear, the MWEP mechanism is characterized by anonymity, strategy-proofness, individual rationality, no subsidy, and no wastage \citep{Ashlagi12}.
However, when objects are heterogeneous and preferences are quasilinear, the MWEP mechanism is not the unique mechanism that satisfies anonymity and these properties \citep{Ryan16}.
This is also true when objects are identical and preferences are allowed to be non-quasilinear \citep{Adachi14}.
Since ETE is weaker than anonymity, there are multiple mechanisms that satisfy ETE, strategy-proofness, individual rationality, no subsidy, and no wastage in these environments.\footnote{\cite{Ryan16} and \cite{Adachi14} characterize the MWEP mechanism using a continuity axiom in addition to anonymity and other properties. }
Moreover, even if objects are identical and preferences are quasilinear, there are many mechanisms that satisfy these properties. 
Against this background, our result is remarkable in that the MWEP mechanism is uniquely pinned down as by these properties.

Several recent papers, like ours, impose fairness conditions as a constraint in addition to incentive constraints. 
But they pursue different objectives. 
\cite{Kazumura20} and \cite{Sakai21} show that the MWEP mechanism is ex-post revenue maximizing among mechanisms that satisfy ETE, strategy-proofness, individual rationality, no subsidy, and no wastage.
\cite{chen25} consider a single object model, and show that a second price auction with flexible reserve prices is expected revenue maximizing among mechanisms that satisfy anonymity and Bayesian incentive compatibility.  
While these papers adopt a similar approach to ours, they require revenue maximization as an additional axiom.
In contrast, our result shows that the MWEP mechanism is uniquely characterized without specifying any particular objective, once fairness is imposed as a constraint.

The Vickrey-Clarke-Groves (VCG) mechanism plays a central role when preferences are quasilinear.
It is a unique mechanism that satisfies efficiency, strategy-proofness, individual rationality, and no subsidy \citep{Holm79, Chew07}.
When preferences may not be quasilinear, however, the VCG is no longer efficient or strategy-proof.
In such settings with unit-demand agents, the MWEP mechanism is characterized using efficiency, strategy-proofness, individual rationality, and no subsidy \citep{Saitoh08, Sakai08, Morimoto15, Zhou18, Wakabayashi25}.

Our result parallels the characterization of the MWEP mechanism in \cite{Morimoto15}.
They consider the same model and domain and characterize the MWEP using efficiency.
Since no-wastage is a significant weakening of Pareto efficiency, our result highlights that equity can almost substitute efficiency objectives of an auction designer. 
We believe that such an alternate foundation is useful because
fairness is an important desideratum in auction design and its relation to efficiency is not well understood in auction models.
Besides, equal treatment of equals is a more \emph{testable} axiom in practice than Pareto efficiency.
For instance, there are
legal implications of violating equal treatment of equals in
auctions -- \citet{Deb16} contain examples of some prominent lawsuits in the United States where auctioneers have been dragged to court for designing auction mechanisms
that discriminate among bidders. 
On the other hand, violating Pareto efficiency usually does not have any legal implications.

In models with multi-demand agents, a Walrasian equilibrium may not exist.\footnote{There is a large literature on the existence of a Walrasian equilibrium; see, for example, \cite{Kelso82}, \cite{Sun06}, \cite{Teytelboym14}, and \cite{Baldwin19}. 
The existence with non-quasilinear preferences is studied by
\cite{Baldwin23} and \cite{Nguyen24}.}
Even when a Walrasian equilibrium exists, the MWEP mechanism may not be strategy-proof.
Moreover, when preferences may not be quasilinear, an efficient and strategy-proof mechanism does not exist \citep{Kazumura16,
Baisa20,Malik19, Shinozaki25}.
Thus, \cite{Shinozaki25b} pursue constrained efficiency, and introduce the notion of the bundling unit-demand minimum price Walrasian mechanism. 
This mechanism partitions objects into several bundles and selects a minimum price Walrasian equilibrium when the bundles are regarded as objects. 
They show that this is the unique mechanism that satisfies constrained efficiency, ETE, strategy-proofness, and no subsidy.


\section{Model and definitions}

There are $n\ge  2$ agents and $m\ge  2$ objects with $n > m$. We denote the set of
agents by $N\equiv \{1,\dots ,n\}$ and the set of objects by $M\equiv
\{1,\dots ,m\}$. Let $L\equiv M\cup \{0\}$, where assigning $0$ means not assigning
any (real) object from $M$. We call $0$ the null object, and unlike a real object in
$M$, it can be assigned to any number of agents. Each agent receives at most one object
and pays some amount of money. Thus, agents' common {\bf consumption set}
is $L \times \mathbb{R}$, and a generic {\bf (consumption) bundle}
for any agent $i\in N$ is a pair $z_i=(a,t)\in L \times \mathbb{R}$.

\subsection{Classical preferences}

Each agent $i\in N$ has a complete and transitive preference $R_i$
over $L\times \mathbb{R}$. Let $P_i$ and $I_i$ be the strict and
indifference relations associated with $R_i$.
\begin{defn}
A preference $R_i$ is {\bf classical} if it satisfies the following four conditions:
\begin{enumerate}

\item {\bf Money monotonicity.} For every $t > t'$ and for every $a \in L$,
we have $(a,t')~P_i~(a,t)$.

\item {\bf Desirability of objects.} For every $t$ and for every $a \in M$, we have $(a,t)~P_i~(0,t)$.

\item {\bf Continuity.} For every $z \in L \times \mathbb{R}$, the sets $\{z': z'~R_i~z\}$ and $\{z': z~R_i~z'\}$ are closed.

\item {\bf Possibility of compensation.} For every $z \in L \times \mathbb{R}$ and
for every $a \in L$, there exists $t$ and $t'$ such that $z~R_i~(a,t)$ and $(a,t')~R_i~z.$
\end{enumerate}
\end{defn}

Let $\mathcal{R}^C$ be the set of all classical preferences. We call $(\mathcal{R}^C)^n$ the {\bf classical domain}. 
Throughout this paper, we assume the preferences to be classical.
We use $\mathcal{R} \subseteq \mathcal{R}^C$ to denote an arbitrary domain of preferences.

A preference $R_i$ is {\bf quasilinear} if there exists a {\bf valuation function} $v: L \rightarrow \mathbb{R}_+$
with $v(0)=0$ such that for all $a,b \in L$ and for all $t,t' \in \mathbb{R}$,
we have $(a,t)~R_i~(b,t')$ if and only if $v(a)-t \ge v(b) - t'$.
We denote the set of all quasilinear preferences as $\mathcal{R}^Q$.

 

The existence of a valuation function $v:L\rightarrow \mathbb{R}$ is restricted to quasilinear preferences. 
However, the notion of valuation can be extended to classical preferences.
\begin{defn}
The {\bf valuation} of agent $i$ with preference $R_i$ for object $a \in L$ at consumption bundle
$z$ is defined as $V^{R_i}(a,z)$, which uniquely solves $(a,V^{R_i}(a,z))~I_i~z$.
\end{defn}

In other words, $V^{R_i}(a,z)$ is the unique amount of transfer that makes agent $i$ indifferent
between consumption bundles $z$ and $(a,V^{R_i}(a,z))$. The existence of $V^{R_i}(a,z)$ and its
uniqueness are guaranteed by the assumptions of classical preferences.\footnote{See \cite{Kazumura16} for the formal proof.} 
Note that if $R_i$ is quasilinear, then for each $a\in L$, $V^{R_i}(a, (0, 0))=v_i(a)$.

\subsection{Mechanisms}

An {\bf object allocation} is an $n$-tuple $(a_1,\dots ,a_n)\in
L^n$ such that for each pair $i,j\in N$ with $i\neq j$, $a_i= a_j$ implies $a_i=a_j=0$.
We denote the set of object allocations by $A$.
A {\bf (feasible) allocation} is an $n$-tuple $z\equiv (z_1,\dots ,z_n)\equiv ((a_1,t_1),\dots ,(a_n,t_n))\in (L\times \mathbb{R})^n$ such that $(a_1,\dots ,a_n)\in A$.
We denote the set of feasible allocations by $Z$.

Fix a domain $\mathcal{R}^n \subseteq (\mathcal{R}^C)^n$.
A {\bf preference profile} is an $n$-tuple $R\equiv (R_1,\dots, R_n)\in
\mathcal{R}^n$. Given $R\in \mathcal{R}^n$ and $i\in N$, let $R_{-i}\equiv
(R_j)_{j\neq i}$. Given $R\in \mathcal{R}^n$ and $N'\subseteq N$,
let $R_{N'}=(R_i)_{i\in N'}$ and $R_{-N}\equiv (R_i)_{i\in
N\setminus N'}$.

A {\bf mechanism} on $\mathcal{R}^n$ is a function $f:\mathcal{R}^n\rightarrow Z$. Given a mechanism $f$
and $R\in \mathcal{R}^n$, we denote the bundle assigned to agent $i$ by $
f_i(R)$ and we write $f_i(R)=(a_i(R),t_i(R))$, where $a_i(R)$ is the object assigned to agent $i$ and $t_i(R)$ is his payment. We require a mechanism to satisfy the following properties.

\begin{defn}
Let $f:\mathcal{R}^n \rightarrow Z$ be a mechanism defined on domain $\mathcal{R}$. 
\begin{enumerate}

\item $f$ is {\bf strategy-proof} if for every $i \in N$,
for every $R_{-i} \in \mathcal{R}^{n-1}$, and for every $R_i,R'_i \in \mathcal{R}$, we have
$$f_i(R_i,R_{-i})~R_i~f_i(R'_i,R_{-i}).$$

\item $f$ is {\bf (ex-post) individual rationality (IR)} if for every $i \in N$, for every $R \in \mathcal{R}^n$,
we have $f_i(R)~R_i~(0,0).$

\item $f$ satisfies {\bf equal treatment of equals (ETE)} if for every $i,j \in N$, for every $R \in \mathcal{R}^n$ with $R_i=R_j$,
we have $f_i(R)~I_i~f_j(R).$

\item $f$ satisfies {\bf no wastage (NW)} if for every $R \in \mathcal{R}^n$ and for every $a \in M$, there
exists some $i \in N$ such that $a_i(R)=a$.

\item $f$ satisfies {\bf no subsidy (NS)}
if for each $R\in \mathcal{R}^n$ and each $i\in N$, we have $t_i(R)\ge  0$.

\end{enumerate}

\end{defn}

Earlier in Section \ref{sec:intro}, we discussed the above properties in
detail. They can either be motivated by weak fairness (ETE) or weak efficiency (NW) or
some practical concerns (NS). Next, we introduce the MWEP mechanism.

\section{A characterization of the MWEP mechanism}

In this section, we define the minimum Walrasian equilibrium price mechanism.
A price vector $p \in \mathbb{R}^{|L|}_+$
defines a price for every object with $p_0=0$. At any price vector $p$,
let $D(R_i,p)\equiv \{a \in L: (a,p_a)~R_i~(b,p_b)~\forall~b \in L\}$ denote
the {\bf demand set} of agent $i$ with preference $R_i$ at price vector $p$.

\begin{defn}
An object allocation $(a_1,\ldots,a_n) \in A$ and a price vector $p$ is a {\bf Walrasian equilibrium}
at a preference profile $R \in (\mathcal{R}^C)^n$ if
\begin{enumerate}
\item $a_i \in D(R_i,p)$ for all $i \in N$ and
\item for all $a \in M$ with $a \ne a_i$ for all $i \in N$, we have $p_a=0$.
\end{enumerate}
We refer to $p$ and $\{z_i \equiv (a_i,p_{a_i})\}_{i \in N}$ defined above as a {\bf Walrasian equilibrium price vector}
and a {\bf Walrasian equilibrium allocation} at $R$ respectively.
\end{defn}

It is known that a Walrasian equilibrium exists at each $R\in (\mathcal{R}^C)^n$ (\cite{Demange85, Alkan90}).
A Walrasian equilibrium price vector $p$ is a {\bf minimum Walrasian equilibrium price vector} at preference
profile $R$ if for every Walrasian equilibrium price vector $p'$ at $R$, we have
$p_a \le p'_a$ for all $a \in L$. \citet{Demange85} prove that if $R$ is a profile of classical preferences, then
the set of Walrasian equilibrium price vectors at $R$ forms a lattice with a unique minimum and a unique maximum.
We denote the minimum Walrasian equilibrium price vector at $R$ as $p^{\min}(R)$. Note that for each $R \in \mathcal{R}^n$, even though $p^{\min}(R)$ is a unique
price vector, there may be many object allocations that can support the Walrasian equilibrium.
Let
$Z^{\min}(R)$ denote the set of all allocations at a minimum Walrasian equilibrium at preference
profile $R$. Note that if $((a_i,t_i))_{i\in N} \in Z^{\min}(R)$ then $t_i=p^{\min}_{a_i}(R)$, i.e.,
the transfer associated with an agent is the price of the object assigned to him in the
Walrasian equilibrium. 
\begin{defn}
A mechanism $f:\mathcal{R}^n \rightarrow Z$ is a {\bf minimum Walrasian equilibrium (MWEP) mechanism}
if $$f(R) \in Z^{\min}(R)~\forall~R \in \mathcal{R}^n.$$
\end{defn}

Let $f^{\min}: \mathcal{R}^n \rightarrow Z$ denote an MWEP mechanism.
For every $R \in \mathcal{R}$, an MWEP mechanism picks {\em any}
one allocation in $Z^{\min}(R)$. It can be easily shown that each agent must be indifferent between
all its allocations in $Z^{\min}(R)$. Hence, whenever we say {\em the} MWEP mechanism,
we mean {\em any} MWEP mechanism.

\citet{Demange85} showed that the MWEP mechanism is strategy-proof and IR. Clearly, it also satisfies
ETE, NW, and NS. Our main result is that on the classical domain, only the MWEP mechanism satisfies these properties.
\begin{theorem}\label{theo:main}
Let $f: (\mathcal{R}^C)^n \rightarrow Z$ be a mechanism. Then, $f$
is strategy-proof, individually rational, and satisfies equal treatment of equals, no wastage, and no subsidy if and only if it is the MWEP mechanism.
\end{theorem}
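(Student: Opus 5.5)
The "if" direction is immediate from \citet{Demange85}: the MWEP mechanism is strategy-proof and individually rational, and it clearly satisfies ETE, NW and NS. The work is in the "only if" direction. My plan is to reduce it to the known efficiency-based characterization of \citet{Morimoto15}: on the classical domain, any mechanism that is strategy-proof, IR, NS and (Pareto) efficient is the MWEP mechanism. So it suffices to show that the five properties force \emph{no envy}. Then I can invoke the characterization of \citet{Svensson83, Sakai13, Morimoto15}, by which no envy, strategy-proofness, IR, NS and NW together pin down the MWEP mechanism. Alternatively, no envy plus NW gives efficiency \citep{Svensson83}, and then \citet{Morimoto15} applies.

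The core step is therefore: if $f$ satisfies the five properties, then for every $R$ and every $i,j$ we have $f_i(R)\,R_i\,f_j(R)$. I would argue by contradiction. Suppose $f_j(R)\,P_i\,f_i(R)$ at some $R$. The natural idea is to let $j$ report $R_i$. Strategy-proofness for $j$ constrains how $f_j$ moves. At the profile $(R_i,R_i,R_{-ij})$, ETE forces $i$ and $j$ to be indifferent under $R_i$. Comparing with $f_i(R)$ via strategy-proofness of $i$ (whose report has not changed) should then yield a contradiction. This is not straightforward, because $j$'s bundle changes when $j$ misreports.

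The cleaner route is to first establish a few structural facts. \textbf{(i)} The mechanism has the \emph{Myerson-type price structure}: for each $i$ and $R_{-i}$ there is a price vector $p^i(R_{-i})$, with $p^i_0=0$ by IR and NS, such that $f_i(R)$ maximizes $R_i$ over $\{(a,p^i_a(R_{-i}))\}$. This holds for strategy-proof mechanisms on rich classical domains, via the usual taxation principle. \textbf{(ii)} Losing agents pay zero and obtain the null object. \textbf{(iii)} The key inequality: for any two agents $i,j$ at $R$, one has $p^i_a(R_{-i})\le p^j_a(R_{-j})$ type comparisons. These would come from considering the profile where $j$ reports $R_i$ and applying ETE, using that richness lets me pick auxiliary preferences (e.g.\ ones with huge income effects, valuing one object enormously and others near zero at high payments) to isolate single prices. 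From (i)--(iii) one shows the price each agent faces for object $a$ is at least the price $a$ is sold at to others, which is exactly no envy.

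The main obstacle I expect is step (iii). ETE only speaks at profiles with identical preferences, so I must transport information from $(R_i,R_i,R_{-ij})$ back to $R$. My first attempt is an induction on the number of agents whose preferences differ from a fixed $R_i$, replacing agents one at a time with $R_i$-copies or with ``extreme'' non-quasilinear preferences. Such preferences demand a single object below a threshold price, and are flexible precisely because income effects are unrestricted. At each replacement, strategy-proofness keeps the relevant price menus unchanged. Here $n>m$ and NW guarantee that some agent receives the null object at zero price, which anchors the comparison. Once no envy is obtained, the cited characterization finishes the proof.
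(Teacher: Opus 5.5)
Your proposal has a genuine gap: the step that carries all the content, no envy, is not proved, and the tools you sketch for it do not work.

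\textbf{The reduction does not simplify anything.} Reducing to no envy is logically sound. But the MWEP mechanism is envy-free, so given the cited no-envy characterization, the claim that the five axioms force no envy is equivalent to Theorem~\ref{theo:main} itself. Everything therefore rests on your step (iii).

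\textbf{None of the ingredients sketched for step (iii) works.}
\begin{enumerate}
\item Letting $j$ report $R_i$ and invoking ETE yields only $f_i(R_i,R_i,R_{-ij})\,P_i\,f_i(R)$. This contradicts nothing: strategy-proofness restricts only unilateral deviations by $i$, and here it is $j$'s report that changed.
\item The claim that ``strategy-proofness keeps the relevant price menus unchanged'' as agents are replaced one at a time does not help. Replacing agent $k$ fixes only $k$'s own menu $p^k(R_{-k})$. Every other agent's menu can move, and those are exactly what you would need to control.
\item No cross-agent menu inequality gives no envy. With $a=a_i(R)$, you know $f_j(R)\,R_j\,(a,p^j_a(R_{-j}))$, so you would need $p^j_a(R_{-j})\le t_i(R)$. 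The inequality you state, $t_i(R)\le p^j_a(R_{-j})$, points the other way and says nothing about envy. The direction you need fails even for the MWEP mechanism. Take quasilinear valuations $(10,1)$, $(5,1)$, $(1,1)$ for $(x_1,x_2)$: agent $1$ buys $x_1$ at $5$, yet agent $2$'s menu price for $x_1$ is $10$. No envy there comes from the Walrasian property, not from comparing menus.
\end{enumerate}

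\textbf{The missing idea is how to make ETE bite when no two preferences coincide.} The paper never passes through no envy.
\begin{enumerate}
\item It uses Fact~\ref{domination}, by which each agent weakly prefers $f_i(R)$ to her MWEP bundle. This reduces the task to showing $t_i(R)\ge p^{\min}_{a_i(R)}(R)$, and the paper assumes this fails for some $i^*$.
\item It then replaces $m$ agents one after another. At stage $k$, the agent holding $x_k$ at $R^{k-1}$ receives an $f_{i_k}(R^{k-1})$-favoring preference with strong income effects, satisfying ($k$-i)--($k$-iii).
\item By Lemma~\ref{sp and fav}, each replaced agent keeps her bundle. At every later MWEP price vector, all prices lie below $\overline{V}$, so her demand is confined to $\{x_k,x_{k+1}\}$.
\item The bulk of the proof (Steps 4--10) uses IRIC sequences and the over/underdemanded-set characterization. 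It finds an outside agent $i\notin N(m)$ and some $j\le m$ with $f_j(R^{j-1})\,P_i\,(0,0)$, and hence $f_j(R^m)\,P_i\,(0,0)$.
\item Only now is ETE used. When $j$ reports $R_i$ at $R^m$, strategy-proofness and ETE force both $i$ and $j$ to receive real objects.
\item By counting, some locked-in agent $k\in N(m)$ then gets $(0,0)$. This contradicts Fact~\ref{domination}, because $V^{R'_k}(x_{k+1},(0,0))>\overline{V}>p_{x_{k+1}}$.
\end{enumerate}
Your sketch has none of these pieces: not the domination fact, not the favoring-preference device that carries bundles across profile changes, and not the counting contradiction. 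As it stands, step (iii) is an unproved restatement of the theorem.
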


We postpone the proof of Theorem \ref{theo:main} to the appendix.
We argue that the domain richness in Theorem \ref{theo:main}
is somewhat necessary. \citet{Ryan16} contains an
example which shows that there are  strategy-proof and IR mechanisms satisfying ETE, NW, and NS that are not MWEP mechanisms if the domain of preferences is domain of quasilinear preferences.

Similarly, the axioms in Theorem \ref{theo:main} are necessary. Below, we
provide some examples to establish this claim.
\begin{itemize}
\item \textsc{Strategy-proofness}.
Consider a mechanism that chooses the maximum Walrasian equilibrium allocation at every profile.
Such a mechanism will satisfy all the properties except strategy-proofness \citep{Miyake98}.

\item \textsc{Individual rationality}.
Consider the MWEP mechanism supplemented by a constant participation fee (which is independent of
the preferences of the agents and equal across all the agents).
Such a mechanism will satisfy all the properties except IR.

\item \textsc{Equal treatment of equals}. Consider the following mechanism, where we
treat some agent, say agent $1$, differently. At every preference profile,
agent $1$ is asked to pick her best bundle in $Z_0\equiv \{(a,0): a \in M\}$, i.e., she
picks her best object at zero payment. Then, for the remaining agents and remaining
objects, we use the MWEP mechanism for the subeconomy. Such a mechanism is clearly strategy-proof,
individually rational, and satisfies no wastage and no subsidy. However, it fails
equal treatment of equals.

\item \textsc{No wastage}.
Consider the mechanism which never sells any of the objects and does not ask agents to pay anything.
This mechanism satisfies all the properties except
no wastage.

\item \textsc{No subsidy}.
Consider the MWEP mechanism supplemented by a constant participation subsidy (which is independent
of the preferences of the agents and equal across all the agents).
Such a mechanism satisfies all the properties except no subsidy.

\end{itemize}

\section{Connection to earlier characterizations}

We describe the connection of our characterization to other characterizations
of the MWEP mechanism in the literature. Most of these characterizations involve
Pareto efficiency. 

\begin{defn}
A mechanism $f:\mathcal{R}^n \rightarrow Z$ is {\bf Pareto efficient} if at
every preference profile $R \in \mathcal{R}^n$, there exists no allocation $%
((\hat{a}_1,\hat{t}_1),\ldots,(\hat{a}_n,\hat{t}_n))\in Z$ such that
\begin{align*}
(\hat{a}_i,\hat{t}_i)~&R_i~f_i(R)~\qquad~\forall~i \in N, \\
\sum_{i \in N}\hat{t}_i &\ge \sum_{i \in N}t_i(R),
\end{align*}
with either the second inequality holding strictly or some agent $i$
strictly preferring $(\hat{a}_i,\hat{t}_i)$ to $f_i(R)$.
\end{defn}

The above definition is the appropriate notion of Pareto efficiency in this
setting. Notice that by distributing some money among all the agents, we can always
make each agent better off than the allocation in any mechanism. Hence, the above definition requires that there should
not exist another allocation where the sum of transfers
is not less and every agent is weakly better off.

The MWEP mechanism is Pareto efficient. The following theorem
characterizes the MWEP mechanism using Pareto efficiency. We remind that $\mathcal{R}^Q$
denotes the set of quasilinear preferences.

\begin{theorem}[\citet{Holm79,Morimoto15}]\label{theo:eff}
Suppose $\mathcal{R} \in \{\mathcal{R}^Q, \mathcal{R}^C\}$ and
let $f:\mathcal{R}^n \rightarrow Z$ be a mechanism on this domain.
Then, $f$ is strategy-proof, Pareto efficient, individually rational,
and satisfies no subsidy if and only if it is an MWEP mechanism.
\end{theorem}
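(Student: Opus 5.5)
The proof splits into the two directions. The ``if'' direction is mostly known. \citet{Demange85} show that an MWEP mechanism is strategy-proof on $\mathcal{R}^C$, hence on $\mathcal{R}^Q$. Individual rationality holds because $0\in L$ has price $0$ and every agent demands her assigned object at $p^{\min}(R)$. No subsidy holds because prices are nonnegative. For Pareto efficiency I will run a first-welfare-theorem argument. Suppose an allocation $\hat z$ weakly improves every agent and does not lower total transfers. Since $f_i(R)$ is demanded at $p=p^{\min}(R)$, $\hat z_i~R_i~f_i(R)$ forces $\hat t_i \le p_{\hat a_i}$, with strict inequality if the improvement is strict. Summing gives $\sum_i \hat t_i \le \sum_i p_{\hat a_i}$. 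This is at most $\sum_{a\in M}p_a$, which equals $\sum_i t_i(R)$ because unassigned objects have price $0$. This contradicts the hypothesis on $\hat z$.

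For the ``only if'' direction I will use the taxation principle and then identify the menu prices. Fix $f$ satisfying the four axioms. By strategy-proofness, for each $i$ and $R_{-i}$ there is a menu price $\pi_i(a;R_{-i})$, for each $a\in L$ reachable by $i$, such that $f_i(R)$ is a best bundle for $R_i$ in the menu $\{(a,\pi_i(a;R_{-i}))\}$. The steps are as follows.

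\textbf{Step 1.} Show $\pi_i(0;R_{-i})=0$. Individual rationality and no subsidy force a zero payment whenever $i$ receives $0$.

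\textbf{Step 2.} Show every object is reachable by $i$. Let $i$ report a preference that strongly favors $a$: quasilinear with a huge value of $a$ (admissible in both domains). Then Pareto efficiency, together with compensation possibilities that are unbounded for such a preference, forces $a_i=a$.

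\textbf{Step 3 (the key step).} Show $\pi_i(a;R_{-i}) = p^{\min}_a(R^{a}_i,R_{-i})$, where $R^a_i$ is such an $a$-favoring report. Equivalently, $\pi_i(a;R_{-i})$ is the smallest payment for $a$ at which the others can be reallocated $M\setminus\{a\}$ and $0$ with transfers that leave each of them at least as well off as under the MWEP allocation among themselves. I will prove the two inequalities separately.

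\emph{Upper bound.} If $\pi_i(a;R_{-i})$ exceeded this amount, I would construct a Pareto improvement. Lower $i$'s payment slightly and pay the others along their indifference curves. The extra revenue from $i$ covers these transfers, which contradicts Pareto efficiency.

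\emph{Lower bound.} This is the hardest part. I expect to use an induction on the number of agents who hold real objects. At each stage I replace the preferences of the other agents, one at a time, by ``truncated'' preferences: classical, or quasilinear in the $\mathcal{R}^Q$ case, that keep the same indifference curve through their current bundle. Strategy-proofness keeps their welfare fixed under each replacement. Pareto efficiency combined with no subsidy then pins down their payments, and this rules out $i$ being able to buy $a$ below the Walrasian-minimum price. For non-quasilinear $R_{-i}$ I cannot use Vickrey formulas, so I will compare indifference curves through the valuations $V^{R_j}(\cdot,\cdot)$ and use the lattice structure of \citet{Demange85}.

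\textbf{Step 4.} Conclude. Step 3 shows that $i$'s menu coincides with the menu she faces under $f^{\min}$. The MWEP mechanism is strategy-proof and gives every object to an agent with a strong enough claim, so this common menu is exactly the Demange--Gale menu facing $i$. Hence $f_i(R)~I_i~f^{\min}_i(R)$ for every $i$. It remains to show that $f(R)$ itself lies in $Z^{\min}(R)$. The allocation $f(R)$ gives each agent a bundle indifferent to her MWEP bundle and, by Pareto efficiency, raises at least as much revenue, so the efficiency argument of the first paragraph shows that $f(R)$ is a minimum Walrasian equilibrium allocation.

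In the quasilinear case there is an alternative route for Step 3. Holmstr\"om's theorem gives that $f$ is a Groves mechanism, and individual rationality with no subsidy pins down the VCG mechanism. VCG coincides with the MWEP mechanism under unit demand (Leonard). I would use this route as a check on the general argument.
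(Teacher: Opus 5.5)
Your ``if'' direction is correct; the first-welfare argument is exactly right. So are Steps 1, 2 and the closing argument of Step 4. In Step 2, what makes it work is that IR and NS bound the compensation owed to the current holder of $a$. For $\mathcal{R}^Q$, your ``check'' route is a complete proof and is what the paper cites \citet{Holm79} for: Holmstr\"om gives a Groves mechanism, IR and NS force the Clarke pivot, and Leonard identifies VCG with MWEP under unit demand.

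The paper gives no proof for $\mathcal{R}^C$ (it cites \citet{Morimoto15}), and there your argument has a genuine gap: Step 3 is not a reduction but the theorem itself. Once Steps 1--2 give $f$'s menu and the Demange--Gale menu the same support, the two bounds in Step 3 are each equivalent to one half of the theorem:
\begin{itemize}
\item The upper bound $\pi_i(a;R_{-i})\le p^{\min}_a(R^a_i,R_{-i})$ for all $(i,a,R_{-i})$ is equivalent to $f_i(R)\,R_i\,f^{\min}_i(R)$ for all $R$. This is the efficiency analogue of Fact~\ref{domination}.
\item The lower bound is equivalent to the reverse inequality. The eleven steps of the appendix establish exactly this, with ETE and NW in place of PE, by ruling out $t_i(R)<p^{\min}_{a_i(R)}(R)$.
\end{itemize}
Neither half is actually supplied.

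What you do write for Step 3 fails under income effects, in three places.

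\emph{The ``equivalent'' revenue-loss description of $p^{\min}_a(R^a_i,R_{-i})$ is a VCG identity that does not survive income effects.} Take one object and other bidders $j,k$ with $v_j=V^{R_j}(a,(0,0))>v_k=V^{R_k}(a,(0,0))$. The MWEP price is $v_j$. In the economy without $i$, $j$'s MWEP bundle is $(a,v_k)$, and keeping her at that welfare without $a$ costs $-V^{R_j}(0,(a,v_k))$. Your formula therefore gives $v_k-V^{R_j}(0,(a,v_k))$, which is a willingness to accept read off a different indifference curve of $R_j$. The two coincide for quasilinear $R_j$ but not in general.

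\emph{The upper-bound construction cannot produce a Pareto improvement.} Lowering $i$'s payment lowers revenue. Any improvement must be measured against the others' actual bundles $f_j(R^a_i,R_{-i})$, about which you know nothing. PE alone is also silent on how surplus is split between $i$ and the others, so it cannot bound $\pi_i(a;R_{-i})$.

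\emph{The lower bound is only announced.} PE with NS does not ``pin down'' payments, and nothing in your sketch converts the truncations into a bound on $i$'s price.

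The missing ingredient is the profile-by-profile machinery that the appendix borrows from \citet{Morimoto15}:
\begin{itemize}
\item $(a,t)$-favoring preferences together with Lemma~\ref{sp and fav}, which freezes the deviating agent's bundle, not merely her welfare;
\item the over/underdemanded characterization of $p^{\min}$ (Fact~\ref{overdemand});
\item demand-connected paths (Fact~\ref{path}).
\end{itemize}
These tools are used to locate, at a suitably modified profile, an agent who receives $0$ yet strictly prefers a bundle that another agent obtains below the minimum Walrasian price. In the efficiency version, PE then plays the role that ETE and NW play in the appendix.
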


Theorem \ref{theo:eff} was proved for $\mathcal{R} = \mathcal{R}^Q$
by \citet{Holm79} and for $\mathcal{R} = \mathcal{R}^C$
by \citet{Morimoto15}.~\footnote{\citet{Zhou18} have shown that
Theorem \ref{theo:eff} continues to hold in smaller non-quasilinear preference
domains.}

The efficiency characterization of Theorem \ref{theo:eff} holds
for both the quasilinear domain and the classical domain. This is not true for
our characterization in Theorem \ref{theo:main}.
As discussed later, our characterization holds for the classical
domain but breaks down for the quasilinear domain. We can use both the
characterizations to point out an interesting connection between Pareto
efficiency and ETE with no wastage in the classical domain.\
\begin{cor}\label{cor:eff}
  Suppose $f:(\mathcal{R}^C)^n \rightarrow Z$ is a strategy-proof and individually
  rational mechanism satisfying no subsidy. Then, the following are equivalent.

  \begin{enumerate}

\item $f$ is Pareto efficient.

\item $f$ satisfies equal treatment of equals and no wastage.

  \end{enumerate}

\end{cor}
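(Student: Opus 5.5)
The corollary follows by chaining Theorem~\ref{theo:main} and Theorem~\ref{theo:eff} through the MWEP mechanism. Throughout, fix a mechanism $f$ on $(\mathcal{R}^C)^n$ that is strategy-proof, individually rational, and satisfies no subsidy. The plan is to show that, under these standing assumptions, each of the two conditions is equivalent to $f$ being an MWEP mechanism.

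(1) $\Rightarrow$ (2): Suppose $f$ is also Pareto efficient. Theorem~\ref{theo:eff}, applied with $\mathcal{R}=\mathcal{R}^C$, then says that $f$ is an MWEP mechanism, i.e., $f(R)\in Z^{\min}(R)$ for every $R$. It remains to check that every MWEP mechanism satisfies equal treatment of equals and no wastage.
\begin{itemize}
\item \emph{Equal treatment of equals.} Take $R$ with $R_i=R_j$ and write $p=p^{\min}(R)$. In a Walrasian equilibrium, $a_i\in D(R_i,p)$ and $a_j\in D(R_j,p)=D(R_i,p)$. Both bundles $(a_i,p_{a_i})$ and $(a_j,p_{a_j})$ are therefore maximal for $R_i$ at prices $p$, so they are indifferent under $R_i$.
\item \emph{No wastage.} Suppose some object $a\in M$ were unassigned. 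The equilibrium condition forces $p_a=0$. Since $n>m$, some agent $k$ receives the null object $0$ at price $0$. By desirability of objects, $(a,0)\,P_k\,(0,0)$, which contradicts $0\in D(R_k,p)$. Hence every object is assigned.
\end{itemize}
Both checks are short, and the paper already asserts them ("Clearly, it also satisfies ETE, NW, and NS").

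(2) $\Rightarrow$ (1): Suppose $f$ also satisfies equal treatment of equals and no wastage. Together with the standing assumptions, all five hypotheses of Theorem~\ref{theo:main} hold, so $f$ is an MWEP mechanism. Theorem~\ref{theo:eff} then gives that $f$ is Pareto efficient, since any MWEP mechanism satisfies all the properties listed there.

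There is no real obstacle: both directions reduce to the two characterization theorems. The only point needing care is that both theorems refer to "an MWEP mechanism" in the sense of any selection from $Z^{\min}(R)$. We use the "only if" direction of one theorem and the "if" direction of the other, and both are stated for that same class of selections, so the equivalence goes through.
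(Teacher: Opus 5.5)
Your proposal is correct and follows the paper's argument: the paper also justifies the corollary by observing that, on the classical domain and under strategy-proofness, individual rationality and no subsidy, each of the two conditions is equivalent to $f$ being an MWEP mechanism, using Theorems~\ref{theo:main} and~\ref{theo:eff}. Your explicit checks that any MWEP mechanism satisfies equal treatment of equals and no wastage (the latter could also be read off from Fact~\ref{positive}) simply spell out the step the paper calls clear.
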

Corollary \ref{cor:eff} is true because in the classical domains, the only mechanism
satisfying these axioms is the MWEP mechanism, which is not the case in the quasilinear
domain.
The connection between fairness and efficiency has been known in the literature. \cite{Svensson83} shows that no envy and no wastage imply Pareto efficiency.\footnote{A mechanism $f:\mathcal{R}^n\rightarrow Z$ satisfies {\it no envy} if for each $R\in \mathcal{R}^n$ and each pair $i, j\in N$, $f_i(R)~R_i~f_j(R)$. It is clear that no envy implies equal treatment of equals.}
Corollary \ref{cor:eff} shows that equal treatment of equal, which is substantially weaker than no envy, and no wastage imply Pareto efficiency under additional auxiliary axioms such as strategy-proofness and individual rationality.

The other result which is worth explaining is \citet{Kazumura20}.
They consider a class of domains that they call {\em rich} domains.
Formally, it is defined as follows.
\begin{defn}
A domain of preference $\mathcal{R}$ is {\bf rich} if for all
$a \in M$ and for all $\hat{p}$ with $\hat{p}_a > 0$ and $\hat{p}_b = 0$
for all $b \ne a$ and for every $p$ with $p_x > \hat{p}_x$ for all $x \in M$,
there exists a preference
$R_i \in \mathcal{R}$ such that $D(R_i,\hat{p})=\{a\}$ and $D(R_i,p)=\{0\}$.
\end{defn}



It is not difficult to see that many domains of preferences can be rich.
\citet{Kazumura20} show that the domain of quasilinear preferences ($\mathcal{R}^Q$),
the domain of classical preferences ($\mathcal{R}^C$), and domains
containing all {\em positive income effect} preferences satisfy richness.
If a domain is rich, then the following result holds.
Denote the {\em revenue} at a preference profile $R$ from a mechanism $f$
as $\textsc{Rev}^f(R)\equiv \sum_{i \in N}t_i(R)$, i.e., the sum of
payments of all the agents at preference profile $R$.
\begin{defn}
A mechanism $f:\mathcal{R}^n \rightarrow Z$ is {\bf ex-post revenue optimal}
among a class of mechanisms if for every mechanism $g:\mathcal{R}^n \rightarrow Z$
in this class,
\begin{align*}
\textsc{Rev}^f(R) \ge \textsc{Rev}^g(R)~\qquad~\forall~R \in \mathcal{R}^n.
\end{align*}
\end{defn}
\begin{theorem}[\citet{Kazumura20}]\label{theo:rich}
Suppose $\mathcal{R}$ is a rich domain. Then, every MWEP mechanism is
ex-post revenue optimal among the class of strategy-proof and IR mechanisms satisfying equal treatment of equals, no wastage, and no subsidy.
\end{theorem}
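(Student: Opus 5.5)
The plan is to compare the two mechanisms agent by agent in utility terms, and then turn that comparison into a statement about revenue using the Pareto efficiency of the MWEP mechanism.

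Fix a mechanism $g$ on the rich domain $\mathcal{R}^n$ that is strategy-proof, IR, and satisfies ETE, NW and NS, and fix a profile $R\in\mathcal{R}^n$. The core claim I would establish is
$$g_i(R)~R_i~f^{\min}_i(R)\qquad\text{for every } i\in N.$$
Given this, suppose $\textsc{Rev}^g(R)>\textsc{Rev}^{f^{\min}}(R)$. Then the allocation $g(R)$ weakly improves every agent relative to $f^{\min}(R)$ and collects strictly more money. This contradicts Pareto efficiency of $f^{\min}(R)$, which I take from the discussion around Theorem~\ref{theo:eff}. (On a general rich domain, Pareto efficiency of a Walrasian allocation still follows from the usual first-welfare argument, using money monotonicity.) Hence $\textsc{Rev}^g(R)\le \textsc{Rev}^{f^{\min}}(R)$.

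To prove the core claim I would use the menu (option set) characterization of strategy-proofness. For agent $i$ and $R_{-i}$, strategy-proofness gives a price $\pi^g_a(R_{-i})$ at which $i$ can obtain each object $a\in L$, and $g_i(R)$ is a best bundle for $R_i$ from this menu. IR together with NS pins $\pi^g_0(R_{-i})=0$: an agent receiving $0$ pays nothing.

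For the MWEP mechanism the corresponding menu price is $q_a(R_{-i})$, the minimum Walrasian price of $a$ in the economy where $i$ is replaced by an agent who demands nothing at any positive price. This is the Demange--Gale characterization. It therefore suffices to show
$$\pi^g_a(R_{-i})\le q_a(R_{-i})\qquad\text{for all } a\in M,$$
because then $i$'s menu under $g$ dominates his menu under $f^{\min}$, and his utility under $g$ is at least his MWEP utility.

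I would prove this menu inequality by contradiction. Suppose $\pi^g_a(R_{-i})>q_a(R_{-i})$ for some $a$. Choose $\hat p$ with $q_a<\hat p_a<\pi^g_a$ and $\hat p_b=0$ for $b\neq a$. Choose $p$ slightly above the relevant prices in every coordinate. Richness then supplies a preference $R^*$ with $D(R^*,\hat p)=\{a\}$ and $D(R^*,p)=\{0\}$.

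The idea is to replace $i$, and successively other agents, by copies of $R^*$. ETE then forces equal utilities among the copies. NW forces $a$ to go to someone, so some copy of $R^*$ must receive $a$ at a price where it is weakly preferred to the null bundle $(0,0)$ that the other copies get. This caps the price of $a$ at roughly $\hat p_a$. Strategy-proofness, applied one agent at a time, transports that bound back to the original $R_{-i}$, contradicting $\pi^g_a>\hat p_a$.

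The main obstacle is this transport step. The ETE and NW constraints bite only at profiles containing identical agents, whereas the inequality is needed at an arbitrary $R_{-i}$. I would first try an induction on the number of agents in $R_{-i}$ whose preferences differ from $R^*$, moving one agent at a time to $R^*$ and using strategy-proofness of that agent to bound how $\pi^g_a$ can change. The definition of $q_a$ as a minimum Walrasian price (via the lattice property of \citet{Demange85}) should make each step's bound monotone.
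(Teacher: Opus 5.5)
Your outer reduction is correct and matches how the result is actually obtained. The claim $g_i(R)\,R_i\,f^{\min}_i(R)$ for all $i$ is exactly Fact~\ref{domination}, which the paper imports from \citet{Kazumura20} rather than proving. From it, revenue optimality follows by your first-welfare argument (equivalently, $t_i(R)\le p^{\min}_{a_i(R)}(R)$ for each $i$, summed over agents). The gap is that this core claim is essentially the whole theorem, and your route to it breaks at two points.

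The first is fixable. You have misidentified the MWEP menu. Under $f^{\min}$, the price at which $i$ can get $a$ is what he pays when he reports a preference demanding only $a$, and this is generally far above the minimum Walrasian price of $a$ among the other agents. Take $n=3$, $m=2$, with quasilinear agents $2$ and $3$ valuing $(a,b)$ at $(10,\delta)$ and $(\delta,10)$. Agent $1$'s MWEP price for $a$ is $10$, whereas your $q_a(R_{-1})$ is essentially $0$. So your target inequality $\pi^g_a(R_{-i})\le q_a(R_{-i})$ already fails for $g=f^{\min}$. With the correct menu prices the reduction is sound; on a rich domain it is equivalent to Fact~\ref{domination}. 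You would then also need to show that every $a$ is actually offered to $i$ under $g$.

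The decisive problem is that there is no argument for the menu inequality itself. The copies-of-$R^*$ step does not cap anything.
\begin{itemize}
\item No wastage only says $a$ goes to some agent, not to a copy of $R^*$.
\item At a profile with more than $m$ copies, equal treatment pins every copy to the utility of $(0,0)$. A copy that receives $a$ therefore pays $V^{R^*}(a,(0,0))$, which is a fact about that copy at that profile only.
\item That fact cannot be carried back to $R_{-i}$ one agent at a time. Strategy-proofness for agent $j$ restricts only $j$'s own bundle as $R_j$ varies; it imposes no monotonicity on the menu $\pi^g_a(\cdot)$ faced by $i$.
\item The lattice and monotonicity properties of $p^{\min}$ are properties of the MWEP mechanism, not of $g$, so they cannot supply the missing bound either.
\end{itemize}
The step you flag as the main obstacle is therefore the entire content of the result, and the proposed induction has nothing to run on.

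What is needed is an argument at a fixed profile, in the spirit of the proof of Theorem~\ref{theo:main}:
\begin{itemize}
\item Suppose $(b,p^{\min}_b(R))\,P_j\,f_j(R)$ for some $j$.
\item Use Facts~\ref{overdemand} and~\ref{path} to locate agents whose bundles $j$ strictly prefers.
\item Freeze bundles with favoring preferences via Lemma~\ref{sp and fav}. Richness supplies exactly such $(a,\hat p_a)$-favoring preferences, with tightly bounded valuations.
\item Apply equal treatment by handing $j$'s preference to such an agent, so that strategy-proofness, no wastage and unit supply collide.
\end{itemize}
With a single object this takes two lines: if $j$ gets $(0,0)$ but strictly prefers the winner $k$'s bundle, giving $k$ the preference $R_j$ keeps $k$ on the object, while equal treatment forces $j$ onto it too. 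With heterogeneous objects, making this work is the substance of \citet{Kazumura20}.
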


Since the classical domain is rich, it follows that Theorem \ref{theo:rich} is an easy corollary
of our main result Theorem \ref{theo:main} when applied to the classical domain. However, such a conclusion cannot be
drawn for other rich domains. For instance, 
the quasilinear domain of preferences is
rich, and we know that Theorem \ref{theo:rich} holds but Theorem \ref{theo:main} does not hold in that domain.

Two other characterizations are worth pointing. \citet{Sakai13} shows
that if there is a {\em single} object, then the MWEP mechanism
is the unique strategy-proof and individually rational mechanism satisfying ETE, NW, and NS in the classical domain.
Hence, we generalize the result of \citet{Sakai13} to the case of multiple
heterogeneous objects. \citet{Ashlagi12} characterizes the MWEP mechanism
for the {\em multiple identical} objects case in the {\em quasilinear} domain,
when agents can be assigned at most one object,
using strategy-proofness, individual rationality, no subsidy,
no wastage, and {\em anonymity} (which is stronger than equal treatment of equals).
As we have discussed, our characterization does not hold in the quasilinear domain.
Further, the result of \citet{Ashlagi12} does not extend to the case where
objects are {\em heterogenous} as \citet{Ryan16} shows that there
are strategy-proof and individually rational mechanisms satisfying no subsidy,
no wastage, and anonymity which is not the MWEP mechanism.
Also, the result of \citet{Ashlagi12} does not extend to the case where objects are identical but preferences are classical \citep{Adachi14}.

\section{Outline of the proof of Theorem \ref{theo:main}}
In this section, we first introduce a useful representation of preferences, which we often use in the proof of Theorem \ref{theo:main}.
We then explain the sketch of the proof using a simple example.
\subsection{Indifference vectors}
\begin{figure}[!hbt]
\centering
\includegraphics[height=2.in]{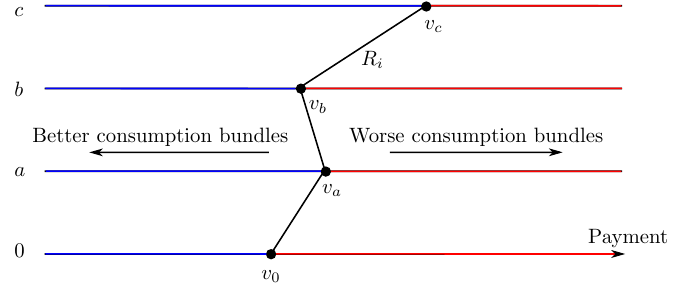}
\caption{An indifference vector}
\label{fig:indiff}
\end{figure}
A vector $v \in \mathbb{R}^{m+1}$ is an {\bf indifference vector} of classical preference
$R_i$ if for all $a, b \in L$, we have $(a,v_a)~I_i~(b,v_b)$. Denote the set of all
indifference vectors of $R_i$ as $\mathcal{I}(R_i)$. 
A typical indifference vector $v$ of a preference $R_i$ can be represented
by a diagram shown in Figure \ref{fig:indiff} for three objects (and the null object).
Each horizontal line
in the figure corresponds to a unique object. Each point on each of the horizontal lines corresponds
to a payment level. So, the set of all consumption bundles
are the four horizontal lines in Figure \ref{fig:indiff}.
As we go right along the horizontal lines, the payment of the agent increases. Hence, consumption bundles
to the right (left) of the indifference vector $v$ shown in Figure \ref{fig:indiff} are worse (respectively, better) than the four
consumption bundles corresponding to $v$.

An equivalent way to think of a preference $R_i$ is through its indifference vectors $\mathcal{I}(R_i)$, which
is an infinite set. Hence, a preference consists of an infinite collection of such vectors: an
illustration is shown in Figure \ref{fig:pref}.
\begin{figure}[!hbt]
\centering
\includegraphics[height=2.in]{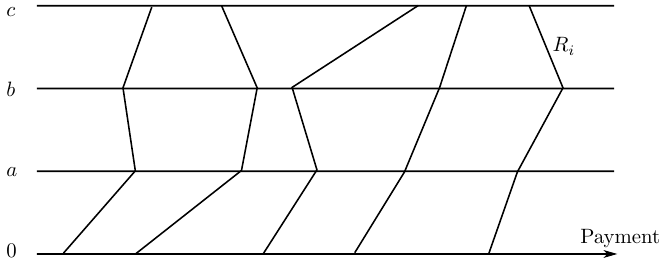}
\caption{A preference and its indifference vectors}
\label{fig:pref}
\end{figure}
Note that for every classical preference $R_i$ and for every distinct $v, v' \in \mathcal{I}(R_i)$, we have
either $v > v'$ or $v' > v$, i.e., $v$ and $v'$ do not intersect.

\subsection{Outline of the proof}\label{Outline}
We now provide an outline of the proof of Theorem~\ref{theo:main}.
For brevity of exposition, we refer to any strategy-proof and individually rational mechanism satisfying equal treatment of equals, no wastage and no subsidy as a {\bf desirable mechanism}.

Since the MWEP mechanism is desirable, what we need to show is that
there is no other mechanism that is desirable.
\cite{Kazumura20} shows that if a mechanism is desirable, then for each preference profile, the mechanism assigns an allocation where each agent receives a bundle that is at least as desirable as bundles that he receives at minimum price Walrasian equilibrium allocations.
\begin{fact}[\citet{Kazumura20}]\label{domination}
Let $f$ be a desirable mechanism on $\mathcal{R}^n$.
For each $R\in \mathcal{R}^n$, each $((a_i, t_i))_{i\in N}\in Z^{\min}(R)$, and each $i\in N$, $f_i(R)~R_i~(a_i, t_i)$.
\end{fact}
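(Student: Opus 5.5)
The plan is to argue by contradiction through strategy-proofness. Fix a desirable $f$, a profile $R$, an agent $i$, and $((a_j,t_j))_{j\in N}\in Z^{\min}(R)$, and suppose $(a_i,t_i)~P_i~f_i(R)$. By strategy-proofness, $f_i(R)$ is a best bundle for $R_i$ in the option set $O_i(R_{-i})\equiv\{f_i(R'_i,R_{-i}):R'_i\in\mathcal{R}\}$. So it suffices to show that $O_i(R_{-i})$ contains, for the object $a_i$, a bundle $(a_i,s)$ with $s\le p^{\min}_{a_i}(R)$. The inequality for the null object already follows from IR, since $p^{\min}_0=0$.

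The anchor of the argument is the case of identical agents. Suppose all agents have the same preference $R_0$. No wastage and $n>m$ imply that some agent $k$ receives the null object, and no subsidy gives $f_k~R_0~(0,0)$ only if $t_k=0$. Equal treatment of equals then puts every agent at the welfare level of $(0,0)$, and IR shows that this level is attained. On the other hand, with $n>m$ identical agents, $p^{\min}(R_0,\dots,R_0)$ is the indifference vector of $R_0$ through $(0,0)$, since each price must make the agents indifferent to $(0,0)$. Hence the claim holds with equality at identical profiles.

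Next, I would pass from an identical profile to an arbitrary one by changing agents one at a time, using an induction on the number of agents whose preference differs from a suitably chosen common preference. For the target object $a=a_i$, choose $R_0$ so that its indifference vector through $(0,0)$ equals the minimum Walrasian prices obtained when agent $i$ is removed and replaced by an ``$a$-lover''. I expect these prices to coincide with $p^{\min}(R)$ on the relevant objects, by the standard characterization of minimum Walrasian prices through overdemanded sets in \citet{Demange85}. At each step, an agent $j$ switches from $R_0$ to $R_j$. Strategy-proofness for $j$, applied in both directions, together with IR and no subsidy for the others, should show that the option set of the remaining $R_0$-type agents, and in particular the price at which an $R_0$-type agent can obtain $a$, never rises above $p^{\min}_a$. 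At the final step, agent $i$ himself deviates from $R_0$ to $R_i$. Since an $R_0$-type agent could obtain $a$ at price at most $p^{\min}_a(R)$, strategy-proofness gives $f_i(R)~R_i~(a,p^{\min}_a(R))=(a_i,t_i)$.

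The main obstacle is the inductive step. When one agent changes preference, the others' bundles can shift in ways that strategy-proofness does not directly control, because strategy-proofness restricts only the deviating agent. I would try first to prove an auxiliary monotonicity lemma: for a fixed $R_{-i}$, the infimum price at which $i$ can obtain object $a$ from $O_i(R_{-i})$ is at most the price of $a$ in the minimum Walrasian equilibrium of the economy in which $i$ reports a preference demanding only $a$ just above that price. I would prove this lemma by combining no wastage, which forces some agent to receive $a$, with a further equal-treatment argument that clones $i$'s report onto the agent receiving $a$. Once this lemma is in hand, the induction should go through, and the lemma is where the richness of $\mathcal{R}^C$, namely freely chosen income effects, would be used.
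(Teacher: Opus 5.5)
There is a genuine gap. Your reduction to $i$'s option set against $R_{-i}$ is correct, and so is your identical-profile anchor. But the proposal stops exactly where the content of the statement begins.

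The inductive step is not a technical detail. Strategy-proofness for the switching agent $j$, used in both directions, only relates $f_j(R_0,R_{-j})$ to $f_j(R_j,R_{-j})$. Individual rationality and no subsidy only bound each agent's \emph{own} payment. None of these constrains the price at which a \emph{different} agent can obtain $a$ by misreporting, so nothing drives the invariant you want to propagate.

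Even the base case is not supplied by the anchor. At an identical profile, ETE pins down everyone's welfare and the price paid by whoever happens to receive $a$. It does not pin down the price at which a given agent $k$ could obtain $a$ by deviating to an $a$-favoring report. Strategy-proofness for $k$'s true preference $R_0$ only shows that this price is \emph{at least} $p^{\min}_a$, not at most.

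Your auxiliary lemma does not reduce the problem; it is essentially Fact~\ref{domination} specialized to an $a$-favoring report $R^a_i$. At $(R^a_i,R_{-i})$ the Fact forces $f_i=(a,s)$ with $s\le p^{\min}_a(R^a_i,R_{-i})$. When $a=a_i$, this price is at most $p^{\min}_a(R)$ by strategy-proofness of the MWEP mechanism, so the lemma and the Fact are essentially equivalent.

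The proof you sketch for the lemma fails once $m\ge 2$. The sketch is: no wastage gives an agent $j$ holding $a$; clone $i$'s report onto $j$; invoke ETE. This works with a single object, because the clone who loses $a$ must receive the null object. He then gets $(0,0)$ by Lemma~\ref{zero payment}, ETE pins the common welfare at the level of $(0,0)$, and $j$ (whose true preference is now $R^a_i$) would gain by reverting to $R_j$. With $m\ge 2$ heterogeneous objects, the losing clone can instead be compensated with another object $b$ at a low price. ETE then only equates the two clones' welfare at an unknown level, and no profitable deviation for $j$ follows.

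What is missing is a construction guaranteeing that some agent of the duplicated type provably ends up with the null object; that is the hard part. It is the kind of work the paper does in Steps 1--11 for the reverse inequality: freezing bundles with favoring preferences as in Lemma~\ref{sp and fav}, and controlling the allocation through demand-connected sequences and over/underdemanded sets. The paper itself does not prove Fact~\ref{domination} but imports it from \citet{Kazumura20}. Your proposal contains no substitute for that argument, so as it stands it is a plan rather than a proof.
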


Let $\mathcal{R}=\mathcal{R}^C$ and $f: \mathcal{R}^n \rightarrow Z$ be a desirable mechanism.
By Fact~\ref{domination}, to complete the proof, it is sufficient to show that for each $R\in \mathcal{R}^n$ and each $i\in N$, $t_i(R)= p^{\min}_{a_{i(R)}}(R)$.\footnote{
Let $R\in \mathcal{R}^n$ and suppose that for each $i\in N$, $t_i(R)=p^{\min}_{a_i(R)}(R)$. Let $((a_j, t_j))_{j\in N}\in Z^{\min}(R)$ and $i\in N$. By Fact~\ref{domination}, $(x_i(R), p^{\min}_{a_i(R)}(R))=f_i(R)~R_i~(a_i, t_i)$. Thus, by $a_i\in D(R_i, p^{\min}(R))$ and thus, $x_i(R)\in D(R_i, p^{\min}(R))$. Hence $f(R)\in Z^{\min}(R)$.}
Fact~\ref{domination} also implies that for each $R\in \mathcal{R}^n$ and each $i\in N$, $t_i(R)\le p^{\min}_{a_{i(R)}}(R)$.\footnote{To see this, let $R\in \mathcal{R}^n$, $((a_i, t_i))_{i\in N}\in Z^{\min}(R)$, and $i\in N$. By Fact~\ref{domination} and $a_i\in D(R_i, p)$, $f_i(R)~R_i~(a_i, t_i)~R_i~(x_i(R), p^{\min}_{a_i(R)}(R))$.
This implies $t_i(R)\le p^{\min}_{a_{i(R)}}(R)$.}
Therefore, all we need to show is that for each $R\in \mathcal{R}^n$ and each $i\in N$, $t_i(R)\ge p^{\min}_{a_{i(R)}}(R)$.
$$
t_i(R)\ge p^{\min}_{a_i(R)}(R) \text{ for each } R\in \mathcal{R}^n \text{ and each }i\in N.
$$

In this section, we overview the proof by focusing on a simple example.
Suppose there are three agents and two objects.
For convenience, denote $M=\{x_1, x_2\}$.
Let $v\in \mathbb{R}_{++}$.
We focus on a preference profile $R\in \mathcal{R}^3$ such that for each $i\in N$,
$$
V^{R_i}(x_1, (0, 0))=V^{R_i}(x_2, (0, 0))= v.
$$

Note that if we assume preferences to be quasilinear, this condition implies preferences in $R$ are identical.
In this case, since at least one agent receives $(0, 0)$, {\sl equal treatment of equals} immediately implies that for each $i\in N$, $f_i(R)~I_i~(0, 0)~I_i~(x_1, p^{\min}_{x_1}(R))~I_i~(x_2, p^{\min}_{x_2}(R))$, completing the proof.
However, we cannot make this assumption, and some of the preferences in $R$ may not be identical.
Here, we focus on the case where preferences in $R$ are all distinct.

Suppose for contradiction that there is an agent, say agent $1$, such that $t_1(R)<p^{\min}_{a_1(R)}(R)$.
By no subsidy, $a_1(R)\ne 0$.
Without loss of generality, assume $a_1(R)=x_1$.
It is easy to see that $p^{\min}_{x_1}(R)=p^{\min}_{x_2}(R)=v$.
Thus, $t_1(R)<p^{\min}_{x_1}(R)=v$.



The main difficulty of our proof is that {\sl equal treatment of equals} can be used only when some agents have the same preference.
Indeed, since preferences are all distinct at $R$, {\sl equal treatment of equals} has no implication for the allocation at $R$.
Therefore, by changing preferences from $R$, we have to construct a preference profile where some agents have identical preferences and {\sl equal treatment of equals} induces a contradiction.
A significant part of the proof is dedicated to construct such a preference profile. We give a sense of this construction below.

We first change agent~1's preference.
Let $R'_1\in \mathcal{R}$ satisfy the following properties:
\begin{enumerate}
\item $V^{R'_1}(x_2, f_1(R))<0$.
\item For each $x\in M$,
$$
V^{R'_1}(x, (0, 0))
\begin{cases}
<t_1(R)+\epsilon&\text{if }x=x_1,
\\
>v&\text{if }x=x_2,
\end{cases}
$$
where $\epsilon >0$ is sufficiently close to zero so that $V^{R'_1}(x_1, (0, 0))<v$. 
\end{enumerate}
\begin{figure}[t!]
  \centering
\includegraphics[height=2.in]{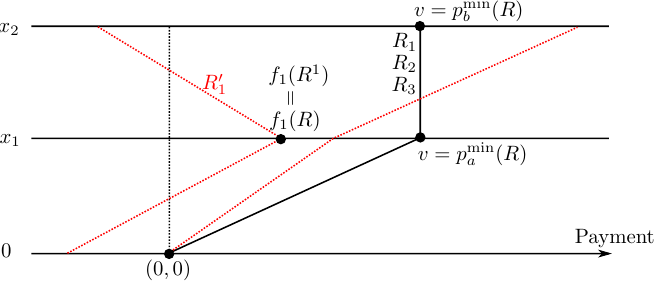}
\caption{An illustration of $R'_1$.}
\label{fig:R'_1}
\end{figure}

Figure~\ref{fig:R'_1} is an illustration of $R'_1$.
The interpretation of $R'_1$ is as follows: The first condition means that, given that agent~1 receives $f_1(R)$, $x_1$ is more preferred to $x_2$ in the sense that agent~1 would not give up $x_1$ in exchange for $x_2$ unless she receives a positive amount of money.
In this sense, agent~1 ``favors'' $f_1(R)$ -- in the proof, we call such a preference a $f_1(R)$-favoring preference (see Appendix~\ref{Preliminaries}).
The second condition means that at $(0, 0)$, $x_2$ is (much) more preferred and $x_1$ is (slightly) less preferred under $R'_1$ than under $R_1$.
This condition also implies that at $(0, 0)$, $x_2$ is preferred to $x_1$ in the sense that the valuation for $x_2$ at $(0, 0)$ is higher than that for $x_1$ at $(0, 0)$.
Since $x_1$ is preferred to $x_2$ at $f_1(R)$, $R'_1$ exhibits income effects.
In the formal proof, we need to define $R'_1$ to satisfy more delicate conditions to incorporate issues that arise in a general setting.

The implication of this delicate construction is the following. Denote $R^1:=(R'_1, R_2, R_3)$.
First, we see that $f_1(R^1)=f_1(R)$.
By strategy-proofness, $f_1(R^1)~R'_1~f_1(R)$.
Thus, if $a_1(R^1)\ne x_1$, then by the construction of $R'_1$, $t_1(R^1)<0$.
This contradicts {\sl no subsidy}.
Thus, $a_1(R^1)=x_1$, and then, strategy-proofness immediately implies $f_1(R^1)=f_1(R)$.
Next, by no wastage, there is an agent who receives $x_2$ at $R^1$.
As we have seen, this is not agent~1.
Without loss of generality, assume that agent~2 receives $x_2$.

Now, we change agent~2's preference.
The idea is similar to $R'_1$, but the roles of $x_1$ and $x_2$ are interchanged.
\begin{figure}[t!]
  \centering
\includegraphics[height=2.in]{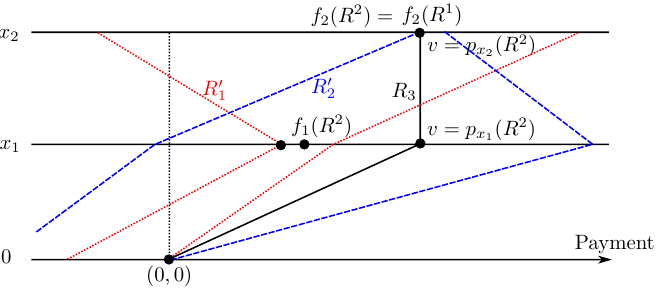}
\caption{An illustration of $R'_2$.}
\label{fig:R'_2}
\end{figure}
Let $R'_2\in \mathcal{R}$ satisfy the following properties:
\begin{enumerate}
\item $V^{R'_2}(x_1, f_2(R^1))<0$.
\item For each $x\in M$,
$$
V^{R'_2}(x, (0, 0))
\begin{cases}
>v&\text{if }x=x_1,
\\
<t_2(R^1)+\epsilon&\text{if }x=x_2,
\end{cases}
$$
where $\epsilon >0$ is sufficiently close to zero so that if $t_2(R^1)<v$, then $V^{R'_2}(x_2, (0, 0))<v$.
\end{enumerate}

Again, here are the implications of the delicate construction of $R'_2$. See Figure~\ref{fig:R'_2} for an illustration of $R'_2$.
Denote $R^2=(R'_1, R'_2, R_3)$.
Note that as in the case of $R^1$, strategy-proofness implies that $f_2(R^2)=f_2(R^1)$.
Then, $a_1(R^2)=x_1$ or $0$.
Suppose $a_1(R^2)=0$.
Individual rationality and no subsidy imply $t_1(R^2)=0$.
It is easy to show that $p^{\min}_{x_1}(R^2)=p^{\min}_{x_2}(R^2)=v$ (see Figure~\ref{fig:R'_2}).
By the construction of $R'_1$, we have $(x_2,p^{\min}_{x_2}(R^2)) ~P'_1~(0, 0)=f_1(R^2)$.
This contradicts Fact~\ref{domination}.
Hence, $a_1(R^2)=x_1$.
By $V^{R'_1}(x_1, (0, 0))<v$ and individual rationality, $t_1(R^2)<v$.

\begin{figure}[t!]
  \centering
\includegraphics[height=2.in]{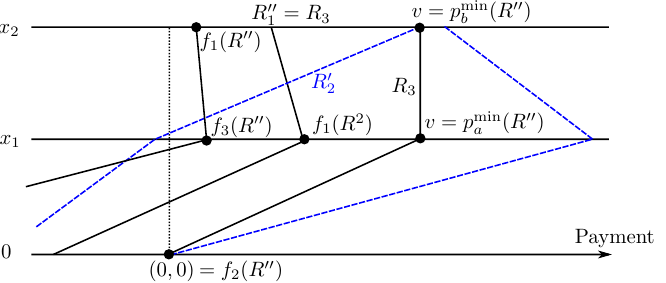}
\caption{An illustration of $R''$.}
\label{fig:R''_1}
\end{figure}
Note that by $t_1(R^2)<v$, $f_1(R^2)~P_3~(0, 0)$.
This is, in fact, a key condition to derive a contradiction.
Now, let $R''_1\in \mathcal{R}$ be such that $R''_1=R_3$, and denote $R'':=(R''_1, R'_2, R_3)$.
For an illustration of $R''$, see Figure~\ref{fig:R''_1}.
By strategy-proofness, $f_1(R'')~R''_1~f_1(R^2)$.
Thus, by $f_1(R^2)~P_3~(0, 0)$ and $R''_1=R_3$, $f_1(R'')~P''_1~(0, 0)$.
By no subsidy, this implies $a_1(R'')\ne 0$.
Since agents~1 and 3 have the same preference, $f_3(R'')~I_3~f_1(R'')~P_3~(0, 0)$.
Hence, $a_3(R'')\ne 0$.

Since there are only two objects, agent~2 receives $0$.
By individual rationality and no subsidy, $f_2(R'')=(0,0)$.
It is also easy to show that $p^{\min}_{x_1}(R'')=p^{\min}_{x_2}(R^*)=v$.
However, by the construction of $R'_2$, we have $(x_1, p^{\min}_{x_1}(R''))~P'_2~(0, 0)=f_2(R'')$.
This contradicts Fact~\ref{domination} and we complete the proof.

In the formal proof, we need to change preferences of $m$ agents, where $m$ is the number of objects, and construct $m$ preference profiles:
\begin{itemize}
\item $R^1:=(R'_1, R_2,\dots, R_n)$,
\item $R^2:=(R'_1, R'_2, R_3,\dots, R_n)$,\vspace{0.1cm}
\\ \vspace{-0.22cm} $\vdots$
\item $R^m:=(R'_1, \dots, R'_m, R_{m+1},\dots, R_n)$.
\end{itemize}
Then, we show that, at $R^m$, there are $i\in \{1,\dots, m\}$ and $j\in N\setminus \{1,\dots, m\}$ such that $f_i(R^m)~P_j~(0, 0)$.
In the above example, it was easy to find such a pair of agents, as the number of agents who prefer $f_1(R)$ to $(0, 0)$ exceeds the number of objects.
However, this may not be the case in general.
In fact, the proof for the existence of such a pair is the most difficult part in our proof. Out of eleven steps in our proof, ten steps are dedicated to show the existence of such a pair of agents.

\section{Conclusion}
\label{sec:conc}
The proof of Theorem \ref{theo:main} is tedious and long, and unfortunately, there is no intuition to explain the result. The proof involves carefully constructing classical preferences and putting together the implications of our axioms at these preferences. While we wished a simpler proof was available for our theorem, it seems unlikely. At the same time, we believe that our result is useful. It provides an equity foundation (using equal treatment of equals) of the MWEP mechanism instead of the efficiency foundation (using Pareto efficiency) that is well-known. It answers the following question for the classical domain of preferences: {\sl Which strategy-proof and individually rational mechanisms satisfy no wastage, no subsidy, and equal treatment of equals?} Our result contributes to the literature on auctions with income effect. In future, we plan to explore an answer to this question in restricted domains of preferences like the quasilinear domain or the domain of positive income effect preferences.

\appendix

\section{Proof of Theorem \ref{theo:main}}

\subsection{Preliminaries}\label{Preliminaries}
First, we state three useful facts. 
The first fact states that if $n>m$, then the minimum Walrasian equilibrium price vector is always positive.

\begin{fact}
\label{positive}
Suppose $n>m$.
For each $R\in \mathcal{R}^n$ and each $a\in M$, $p_a^{\min}(R)>0$.
\end{fact}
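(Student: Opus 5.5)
We argue by contradiction, using only the definition of Walrasian equilibrium and the desirability of objects. Fix $R$ and write $p\equiv p^{\min}(R)$. Suppose some $a\in M$ has $p_a\le 0$. Since prices are nonnegative, $p_a=0$. Let $(a_1,\dots,a_n)$ be an object allocation supporting $p$ as a Walrasian equilibrium.

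The first step is a counting argument. At most $m$ agents receive real objects and $n>m$, so some agent $j$ has $a_j=0$. Equilibrium condition 1 then says $0\in D(R_j,p)$, so $(0,0)\,R_j\,(b,p_b)$ for every $b\in L$. Take $b=a$. Desirability of objects gives $(a,0)\,P_j\,(0,0)$, and $(a,p_a)=(a,0)$. This contradicts $0\in D(R_j,p)$. Hence $p_a>0$ for every $a\in M$.

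Notice that condition 2 of the equilibrium definition is never used. Also, whether $a$ is assigned plays no role: the deviation is always evaluated for the unassigned agent $j$, who strictly prefers any real object at zero price to the null object at zero price. No step here is a real obstacle. The only point to check is that an agent receiving the null object always exists, and this is exactly where the hypothesis $n>m$ enters. Without it, every agent could hold a real object and a zero price on some object would not be excluded.
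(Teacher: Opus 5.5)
Your proof is correct and complete: since $n>m$, some agent $j$ receives the null object in any Walrasian equilibrium, and $0\in D(R_j,p)$ together with desirability of objects rules out $p_a=0$ for every $a\in M$. The paper states this fact without proof, and your argument is the natural one it relies on; note that it actually shows every Walrasian equilibrium price vector is positive on $M$, not only the minimum one.
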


The following notions play an important role in the proof.

\begin{defn}
Let $R\in \mathcal{R}^n$ and $p\in \mathbb{R}^{|L|}_+$ be a price vector.
A set of real objects $M'\subseteq M$ is {\bf overdemanded at $p$ for $
R$} if
\begin{align*}
|\{i\in N:D(R_i,p)\subseteq M'\}| >|M'|.
\end{align*}

A set of real objects $M'\subseteq M$ is {\bf underdemanded at $p$ for $
R$} if $p_a > 0$ for all $a \in M'$ and
\begin{align*}
|\{i\in N: D(R_i,p)\cap
M'\neq \emptyset\}| < |M'|,
\end{align*}
and {\bf weakly underdemanded at $p$ for $R$} if the above inequality holds
weakly.
\end{defn}

The following fact is a characterization of the minimum Walrasian
equilibrium price vector by means of overdemanded and weakly underdemanded
sets.

\begin{fact}[\citet{Mishra10, Morimoto15}]
\label{overdemand} Let $R\in \mathcal{R}^n$ and $p\in \mathbb{R}^{|L|}_+$ be a price vector.
Then, $p$ is a minimum Walrasian equilibrium price vector at $R$ if and only if
no set of real objects is overdemanded and no set of real objects is weakly underdemanded at $p$ for $R$.
\end{fact}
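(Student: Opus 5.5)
The plan is to prove the two directions separately. The workhorses are Hall's marriage theorem (and its two-sided version, the Mendelsohn--Dulmage theorem), the lattice property of Walrasian equilibrium prices from \citet{Demange85}, and continuity of classical preferences. Throughout, for $T\subseteq M$ let $N_T(p)=\{i: D(R_i,p)\cap T\neq\emptyset\}$.

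\emph{Sufficiency.} Suppose no set is overdemanded and no set is weakly underdemanded at $p$. I first show that $p$ is a Walrasian equilibrium price vector. Build the bipartite demand graph between agents and real objects, with an edge $ia$ whenever $a\in D(R_i,p)$.
\begin{itemize}
\item Let $X=\{i: 0\notin D(R_i,p)\}$. For $S\subseteq X$, put $M'=\bigcup_{i\in S}D(R_i,p)$. Every $i\in S$ has $D(R_i,p)\subseteq M'$, so no overdemand gives $|M'|\ge |S|$. By Hall's theorem some matching saturates $X$.
\item Let $Y=\{a\in M: p_a>0\}$. For $T\subseteq Y$, no weak underdemand gives $|N_T(p)|>|T|$. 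By Hall's theorem some matching saturates $Y$.
\end{itemize}
The Mendelsohn--Dulmage theorem then yields one matching that saturates both $X$ and $Y$. Give every unmatched agent the null object, which he demands because he lies outside $X$. Every unassigned object lies outside $Y$ and so has price $0$. Hence this assignment together with $p$ is a Walrasian equilibrium.

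For minimality, let $p'$ be any Walrasian equilibrium price vector. By the lattice property, $\min(p,p')$ is also an equilibrium price vector, so I may assume $p'\le p$. Suppose $T=\{a: p'_a<p_a\}$ is nonempty; it consists of objects with $p_a>0$. Take an agent in $N_T(p)$. At $p'$ the objects in $T$ became strictly cheaper while all other prices are unchanged. By money monotonicity his best object in $T$ now strictly beats everything outside $T$, so $D(R_i,p')\subseteq T$. Since $T$ is not weakly underdemanded at $p$, we have $|N_T(p)|>|T|$. Therefore $T$ is overdemanded at $p'$. An overdemanded set cannot exist at an equilibrium, because those agents would all need distinct objects from $T$. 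This contradiction shows $T=\emptyset$, so $p\le p'$.

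\emph{Necessity.} Let $p=p^{\min}(R)$ with equilibrium assignment $(a_i)_{i\in N}$. No set is overdemanded, by the counting argument just given. Suppose $T$ is weakly underdemanded at $p$. Every $a\in T$ has $p_a>0$ and so is assigned, necessarily to an agent in $N_T(p)$. Hence $|N_T(p)|=|T|$, and the agents in $N_T(p)$ are exactly the holders of $T$. Now choose such a $T$ to be minimal and lower all prices in $T$ by a small $\varepsilon>0$, obtaining $p^\varepsilon$.
\begin{itemize}
\item By continuity and finiteness, agents outside $N_T(p)$ strictly preferred their demanded bundles to everything in $T$ at $p$, so they keep their demand.
\item Agents in $N_T(p)$ have $D(R_i,p^\varepsilon)\subseteq D(R_i,p)\cap T$.
\end{itemize}
I then need a perfect matching between $N_T(p)$ and $T$ under the new demands. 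If Hall's condition fails, take the violating subset $S$ of agents and the set $T'$ of objects they demand. I would then hope to show that $T\setminus T'$ (or a related proper subset) was already weakly underdemanded at $p$, contradicting the minimality of $T$. If the matching exists, $p^\varepsilon$ is an equilibrium price vector strictly below $p$, contradicting minimality of $p$.

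\emph{Main obstacle.} I expect the hardest step to be exactly this re-matching. With income effects, lowering the prices in $T$ uniformly by $\varepsilon$ can shrink each agent's demand within $D(R_i,p)\cap T$ in an uncontrolled way. If the minimal-$T$ argument does not go through, my fallback is to lower prices in $T$ non-uniformly: lower along an indifference vector of one agent at a time, using $V^{R_i}$, so that chosen agents stay indifferent among their previously demanded objects in $T$. Alternatively, I would run an exact Demange--Gale type descending adjustment and show that it terminates strictly below $p$.
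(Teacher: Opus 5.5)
Your sufficiency half is correct. The two Hall arguments plus Mendelsohn--Dulmage produce a Walrasian equilibrium at $p$. The comparison with $\min(p,p')$ also works: every $i\in N_T(p)$ has $D(R_i,p')\subseteq T$, so $T$ is overdemanded at $p'$.

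The gap is in the only-if half, at its one substantive step: showing that the minimum price vector admits no weakly underdemanded set. After lowering all prices in $T$ by $\varepsilon$, you only know $D(R_i,p^\varepsilon)\subseteq D(R_i,p)\cap T$ for $i\in N_T(p)$. With income effects these sets can shrink so that no perfect matching of $N_T(p)$ with $T$ survives, and choosing $T$ minimal does not help. Here is an example:
\begin{itemize}
\item Let $T=\{a,b\}$, $N_T(p)=\{1,2\}$, and $D(R_1,p)=D(R_2,p)=\{a,b\}$, with no other agent demanding $a$ or $b$.
\item Let both agents' premium for $b$ over $a$ fall as their utility rises, so that $V^{R_i}(b,(a,p_a-\varepsilon))<p_b-\varepsilon$ for $i=1,2$.
\item Each singleton is demanded by two agents, so $T$ is a minimal weakly underdemanded set.
\item Yet $D(R_1,p^\varepsilon)=D(R_2,p^\varepsilon)=\{a\}$, so Hall's condition fails.
\item Moreover $T\setminus\{a\}=\{b\}$ is demanded by both agents at $p$, so it is not weakly underdemanded.
\end{itemize}
Your hoped-for step is therefore false, not merely unverified. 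Since minimality of $T$ rules out every proper subset of $T$, no variant of that step can work.

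In this example a strictly lower equilibrium exists only through a non-uniform decrease. Take $p'_a=p_a-\varepsilon$ and $p'_b=V^{R_1}(b,(a,p_a-\varepsilon))<p_b$. Agent 1 then stays indifferent between $a$ and $b$, and agent 2 takes whichever of the two he weakly prefers.

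What your proof needs is the general form of this construction. You need a lemma stating that if $T$ is weakly underdemanded at a Walrasian price vector $p$, then there is a Walrasian price vector $p'\le p$ with $p'\neq p$. Its proof must lower the prices in $T$ along agents' indifference vectors (via $V^{R_i}$). It must also show that a perfect matching between $N_T(p)$ and $T$ survives even when many agents, with different income effects, are indifferent among several objects.

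This is exactly the part with no quasilinear shortcut; under quasilinearity a uniform decrease leaves demand sets inside $T$ unchanged. Your fallbacks only name candidate strategies without carrying any of them out, so the only-if direction is unproven as written. The paper itself does not prove this Fact either; it defers to \citet{Morimoto15}.
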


Fact~\ref{overdemand} implies the following fact.

\begin{fact}[Demand connected sequence]
\label{path} Let $n>m$, $R\in \mathcal{R}^n$, and $((a_i, t_i))_{i\in N}\in Z^{\min}(R)$. Then, for every agent $i^* \in N$, there
is a sequence of $K$ distinct agents $\{i_k\}_{k=1}^K$ such that\footnote{If $a_{i^*}=0$, then the sequence is $\{i_k\}_{k=1}^K=\{i^*\}$ and thus the latter part of Condition (2) and Condition (3) vacuously hold.} \\
(1) $i_1=i^*$,\\
(2) $a_{i_K}=0$ and  for each $k\in \{1,\dots, K-1\},\ a_{i_k}\ne 0$, and \\
(3) for each $k\in \{2,\dots ,K\}$, $\{a_{i_{k-1}},a_{i_k}\}\subseteq
D(R_{i_k},p^{\min}(R))$.
\end{fact}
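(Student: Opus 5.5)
We argue by contradiction, combining Fact~\ref{overdemand} with Fact~\ref{positive}: if no demand-connected path from $i^*$ reaches an agent holding the null object, then the objects that can be reached form a weakly underdemanded set.

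Fix $R$, an allocation $((a_i,t_i))_{i\in N}\in Z^{\min}(R)$, and write $p\equiv p^{\min}(R)$. Since $t_i=p_{a_i}$ and the allocation is a Walrasian equilibrium, $a_i\in D(R_i,p)$ for all $i$. If $a_{i^*}=0$, the one-element sequence $\{i^*\}$ works, so assume $a_{i^*}\neq 0$.

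\emph{Reachability.} Call an agent $j$ \emph{reachable} if there is a finite sequence $i_1=i^*,i_2,\dots,i_K=j$ with two properties:
\begin{enumerate}[label=(\roman*)]
\item $a_{i_k}\neq 0$ for every $k<K$;
\item $\{a_{i_{k-1}},a_{i_k}\}\subseteq D(R_{i_k},p)$ for every $k\in\{2,\dots,K\}$.
\end{enumerate}
If such a sequence repeats an agent, cutting out the loop between the two occurrences leaves a valid sequence. So a shortest such sequence consists of distinct agents. Hence it suffices to show that some reachable agent $j$ has $a_j=0$: the shortest sequence to $j$ then satisfies (1)--(3) of the Fact.

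\emph{Contradiction.} Suppose no reachable agent holds $0$. Let $S$ be the set of reachable agents; it is nonempty since $i^*\in S$. Let $M'\equiv\{a_j: j\in S\}\subseteq M$. Because real objects are held by distinct agents, $|M'|=|S|$. By Fact~\ref{positive}, $p_a>0$ for all $a\in M'$.

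We now show that every agent $j$ with $D(R_j,p)\cap M'\neq\emptyset$ lies in $S$. Pick $a_{i}\in D(R_j,p)\cap M'$ with $i\in S$, and take a sequence reaching $i$. Every agent on it, including $i$ itself, holds a real object by our hypothesis. Appending $j$ gives a valid sequence, since $\{a_i,a_j\}\subseteq D(R_j,p)$ (recall $a_j\in D(R_j,p)$). So $j\in S$.

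It follows that $|\{j: D(R_j,p)\cap M'\neq\emptyset\}|\le |S|=|M'|$, so $M'$ is weakly underdemanded at $p$. This contradicts Fact~\ref{overdemand}, since $p$ is the minimum Walrasian equilibrium price vector. Hence some reachable agent holds $0$, and the shortest sequence reaching that agent is the required one.

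The only delicate point is the closure step just above. Appending $j$ needs the last agent on the path to hold a real object, which is exactly what the contradiction hypothesis supplies. It also needs the equilibrium property $a_j\in D(R_j,p)$.
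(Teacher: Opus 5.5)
Your proof is correct and follows the route the paper indicates; the paper gives no argument and only asserts that Fact~\ref{overdemand} implies this fact. If no demand-connected path from $i^*$ reached a holder of the null object, the objects held by reachable agents would form a nonempty weakly underdemanded set at $p^{\min}(R)$, with positivity from Fact~\ref{positive}; your loop-cutting and closure steps supply exactly the details the paper leaves implicit.
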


The formal proof of Fact~\ref{overdemand} is given in \citet{Morimoto15}.

Now we state two lemmas.
The following lemma states that under a mechanism satisfying \textsl{individual rationality} and
\textsl{no subsidy}, the payment of an agent who does not receive any real object is zero.

\begin{lemma}
\label{zero payment}
Let $f$ be a mechanism on $\mathcal{R}^n$ satisfying individual rationality and
no subsidy.
For each $R\in \mathcal{R}^n$ and each $i\in N$, if $a_i(R)=0$, then $t_i(R)=0$.
\end{lemma}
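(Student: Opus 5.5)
The argument combines individual rationality with the desirability of objects and money monotonicity, and then closes the gap with no subsidy. Fix $R\in\mathcal{R}^n$ and $i\in N$ with $a_i(R)=0$, so that $f_i(R)=(0,t_i(R))$.

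First, individual rationality gives $(0,t_i(R))~R_i~(0,0)$. Suppose, for contradiction, that $t_i(R)>0$. Money monotonicity, applied to the null object with payments $t_i(R)>0$, yields $(0,0)~P_i~(0,t_i(R))$. This contradicts individual rationality, so $t_i(R)\le 0$.

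Second, no subsidy gives $t_i(R)\ge 0$. Together with the first step this forces $t_i(R)=0$, which is the claim.

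Desirability of objects and the other axioms play no role here; the lemma is a direct combination of the two constraints. There is no real obstacle. The only point to be careful about is that money monotonicity is stated with strict preference for every object in $L$, including the null object $0$, so the comparison of $(0,0)$ with $(0,t_i(R))$ is covered by the definition of a classical preference.
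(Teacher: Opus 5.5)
Your proof is correct and is exactly the argument the paper has in mind when it omits the proof as ``straightforward from individual rationality and no subsidy'': individual rationality together with money monotonicity for the null object gives $t_i(R)\le 0$, and no subsidy gives $t_i(R)\ge 0$. One small inconsistency: your opening sentence lists desirability of objects as an ingredient, but, as you yourself note later, the argument never uses it.
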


We omit the proof because it is straightforward from individual rationality and no subsidy.
Next,  we introduce a notion of preferences called the ``$(a,t)$-favoring'' preferences.
\begin{defn}
Let $(a,t)\in M\times \mathbb{R}_+$.
A preference relation $R_i\in \mathcal{R}$ is {\bf $(a,t)$-favoring} if for price vector $p$ with $p_{a}=t$ and $p_{b}=0$ for each $b\in L\setminus \{a\}$, $D(R_i,p)=\{a\}$.
\end{defn}

\begin{lemma}
\label{sp and fav}
Let $f$ be a mechanism on $\mathcal{R}^n$ satisfying \textsl{strategy-proofness} and \textsl{no subsidy}.
Let $R\in \mathcal{R}^n$ and $i\in N$ be such that $a_i(R)\ne 0$. Let $R_i'\in \mathcal{R}$ be $f_i(R)$-favoring. Then, $f_i(R_i',R_{-i})=f_i(R)$.
\end{lemma}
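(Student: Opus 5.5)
Write $f_i(R)=(a,t)$ with $a\ne 0$; by no subsidy $t\ge 0$, so $(a,t)\in M\times\mathbb{R}_+$ and the notion of an $f_i(R)$-favoring preference is defined. Let $p$ be the price vector with $p_a=t$ and $p_b=0$ for $b\in L\setminus\{a\}$. Since $R'_i$ is $f_i(R)$-favoring, $D(R'_i,p)=\{a\}$. This says that $(a,t)~P'_i~(b,0)$ for every $b\in L\setminus\{a\}$. Write $(a',t')\equiv f_i(R'_i,R_{-i})$. The plan is to use strategy-proofness twice: first to show the object cannot change, then to show the payment cannot change.

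\emph{Step 1: the object is still $a$.} By strategy-proofness applied to the true preference $R'_i$ with the deviation to $R_i$, we get $(a',t')~R'_i~(a,t)$. Suppose $a'\ne a$. By no subsidy $t'\ge 0$, so money monotonicity gives $(a',0)~R'_i~(a',t')$. Chaining these, $(a',0)~R'_i~(a',t')~R'_i~(a,t)~P'_i~(a',0)$. This is a contradiction, so $a'=a$.

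\emph{Step 2: the payment is still $t$.} With $a'=a$, the relation $(a,t')~R'_i~(a,t)$ together with money monotonicity gives $t'\le t$. Now apply strategy-proofness to the true preference $R_i$, which could deviate to $R'_i$. This gives $(a,t)=f_i(R)~R_i~f_i(R'_i,R_{-i})=(a,t')$, so by money monotonicity $t\le t'$. Hence $t'=t$, and $f_i(R'_i,R_{-i})=f_i(R)$.

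There is no real obstacle here. The only point needing care is Step 1: it must invoke no subsidy to rule out a compensating negative payment $t'<0$ paired with a different object. This is exactly where the favoring condition, which only compares $(a,t)$ with bundles at zero price, is sufficient.
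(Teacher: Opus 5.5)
Your proof is correct and is essentially the standard argument the paper defers to (the paper gives no proof and cites Morimoto and Serizawa, 2015). The argument is: strategy-proofness at $R'_i$, combined with no subsidy and the favoring property $(a,t)\,P'_i\,(b,0)$ for all $b\neq a$, rules out any change of object. Strategy-proofness applied in both directions then forces the payment to stay at $t$.
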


For the formal proof of Lemma \ref{sp and fav}, see \citet{Morimoto15}.




\subsection{Proof of Theorem \ref{theo:main}}

Let $f$ be a desirable mechanism on $\mathcal{R}^n$.
Let $R\in \mathcal{R}^n$.
By Fact~\ref{domination}, for each $i\in N$, $t_i(R)\le p^{\min}_{a_i(R)}(R)$.
Hence, the proof of our theorem is completed by establishing that for each $i\in N$, 
$t_i(R)\ge p^{\min}_{a_i(R)}(R)$.
%

Let $\overline{V}\in \mathbb{R}$ be such that
$$\overline{V}>\max_{i\in N}\max_{a\in M}V^{R_i}(a,(0,0)).$$
Note that the right hand side of the inequality is well-defined by finiteness of $N$ and $M$.
Note also that $\overline{V}>0$ by desirability of objects.

Next, we introduce the notion of \emph{individually rational indifference-connected sequence}, which plays an
important role in the proof.

\begin{defn}
Given $(a,t) \in M \times \mathbb{R}$, a pair $S:=(\{i_j\}_{j=1}^k, \{a^j\}_{j=1}^k)$ of sequences of $k\le m$ distinct agents and $k$ distinct real objects is {\bf individually rational indifference-connected (IRIC) sequence from $(a, t)$} if there is a sequence of consumption bundles $\mathbf{z} \equiv \{z^j\}_{j=1}^{k}\equiv \{(a^j, t^j)\}_{j=1}^{k}$ such that 

\begin{enumerate}

\item $z^1 = (a,t)$ i.e., $a^1=a$ and $t^1=t$


\item $z^j~P_{i_j}~(0,0)~\forall~j \in \{1,\ldots,k\}$

\item $z^j~I_{i_j}~z^{j+1}~\forall~j \in \{1,\ldots,k-1\}$.

\end{enumerate}
\end{defn}

\begin{figure}[t!]
\centering
\includegraphics[width=0.8\linewidth]{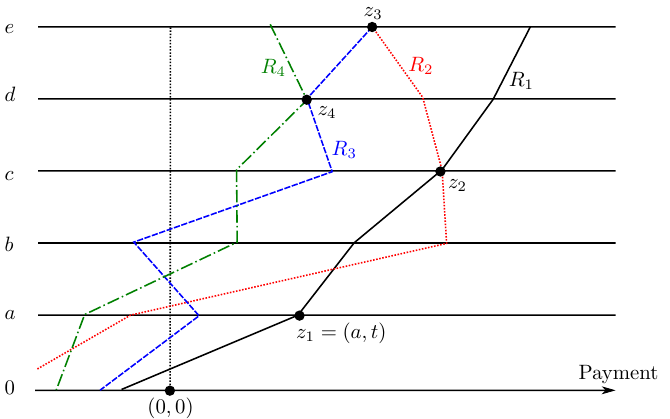}
\caption{Illustration of an IRIC from $(a,t)$.}
\label{fig:IRIC}
\end{figure}

Figure~\ref{fig:IRIC} illustrates an IRIC sequence $S=(\{1,2,3,4\}, \{a, c, e, d\})$ at $(a, t)$. Note that the sequence of consumption bundles $\{z^1,z^2,z^3,z^4\}$ contains distinct agents and distinct real objects.

Let $\mathcal{S}(a,t)$ be the set of all IRIC sequences from $(a,t)$. 
Note that, by continuity, for each $(a, t)\in M\times  \mathbb{R}$ and each $S\in \mathcal{S}(a,t )$, there is $d>t$ such that for each $t'\in \mathbb{R}$ with $t'\le d$, $S$ is also an IRIC from $(a, t')$ for $K$.
Pick such a number for each $z\in M\times\mathbb{R}$ and each $S\in \mathcal{S}(z)$, and denote it by $d(z,S)$.
Given $z\in M\times \mathbb{R}$, let
\begin{equation*}
d(z)\equiv
\begin{cases}
\min\{d(z,S):S \in \mathcal{S}(z)\} & \text{if }\mathcal{S}(z)\neq \emptyset , \\
\overline{V} & \text{otherwise.}
\end{cases}
\end{equation*}
Note that $d(z)$ is well-defined since $\mathcal{S}(z)$ is finite. 
Note also that for each  $z\in M\times \mathbb{R}$, $d(z)\le  \overline{V}$.~\footnote{To see this, let $z\equiv (a, t)\in M\times \mathbb{R}$. If $\mathcal{S}(z)=\emptyset$, then $d(z)=\overline{V}$.
Suppose $\mathcal{S}(z)\neq \emptyset$. 
Let $S=(\{i_j\}_{j=1}^k, \{a^j\}_{j=1}^k)\in \mathcal{S}(z)$.
By the definition of $d(z, S)$, $S$ is also an IRIC sequence from $(a, d(z, S))$.
Thus, by the definition of IRIC sequence, we have $(a, d(z, S))~P_{i_1}~(0, 0)$.
This implies $d(z, S)<V^{R_{i_1}}(a,(0,0))$.
By $V^{R_{i_1}}(a,(0,0))<\overline{V}$ and $d(z)\le  d(z, S)$, we obtain $d(z)\le  \overline{V}$.
}


Now, assume for contradiction that there is an agent $i^*\in N$ such that 
$$t_{i^*}(R)<p^{\min}_{a_{i^*}(R)}(R).$$
By {\sl no subsidy}, $a_{i^*}(R)\ne 0$.

Denote the set of objects as $M\equiv \{x_1,\dots ,x_m\}$ instead of $\{1,\ldots,m\}$ for convenience.
For convenience, we abuse notation and denote $
x_{m+1}\equiv x_1$.
For simplicity of notation, for each $R'\in \mathcal{R}^n$ and each $k\in \{1,\dots
,m+1\}$, we write $p^{\min }_k(R')$ instead of $p^{\min }_{x_k}(R')$.  Without loss of generality, assume $a_{i^*}(R)=x_1$.
Then, using the new notation, we have $t_{i^*}(R)<p^{\min }_1(R)$. 
\begin{step}
\label{pref}
There exists a sequence $\{i_1, \dots, i_m\}$
of $m$ distinct agents 
and a preference profile of these agents $R_{\{i_1,\ldots, i_m\}}'\in \mathcal{R}^m$ such that for the sequence of $(m+1)$ preference profiles
\begin{align*}
    R^0 &= R \\
   R^1 &= (R'_{i_1},R_{-i_1}) \\
    R^2 &= (R'_{\{i_1, i_2\}},R_{-\{i_1, i_2\}}) \\
    \ldots &= \ldots \\
    R^k &= (R'_{\{i_1,\ldots, i_k\}},R_{-\{i_1,\ldots, i_k\}}) \\
    \ldots &= \ldots \\
    R^m &= (R'_{\{i_1,\ldots,i_m\}}, R_{-\{i_1,\ldots,i_m\}})
\end{align*}
the followings hold.
For each $k\in \{1,\dots ,m\}$, $a_{i_k}(R^{k-1})=x_k$, and $R_{i_k}'$ satisfies the
conditions below:\footnote{In the example in Section~\ref{Outline}, the first condition of $R'_1$ corresponds to Condition (1-i) and the second one corresponds to Condition (1-ii).}\\
\begin{enumerate}
\item[($k$-i)] $R_{i_k}'$ is $f_{i_k}(R^{k-1})$-favoring, \\
\item[($k$-ii)] For each $a\in M$,
\begin{align*}
V^{R'_{i_k}}(a,(0,0))\begin{cases}
<d(f_{i_k}(R^{k-1}))&\text{if } a=x_k,\\
>\overline{V}&\text{if }a=x_{k+1},\\
<\overline{V}&\text{if }a\in M\setminus \{x_k, x_{k+1}\},
\end{cases}
\end{align*}
\item[($k$-iii)] for each $a\in M\setminus \{x_k,x_{k+1}\}$,
$V^{R'_{i_k}}(x_{k+1},(a,0))>\overline{V}$.~\footnote{In case of $m=2$, some of the conditions are redundant.
In this case, they vacuously hold, and cause no problem in
the rest of the proof.}
\end{enumerate}
\end{step}
\begin{figure}[t!]
\centering
\includegraphics[width=0.8\linewidth]{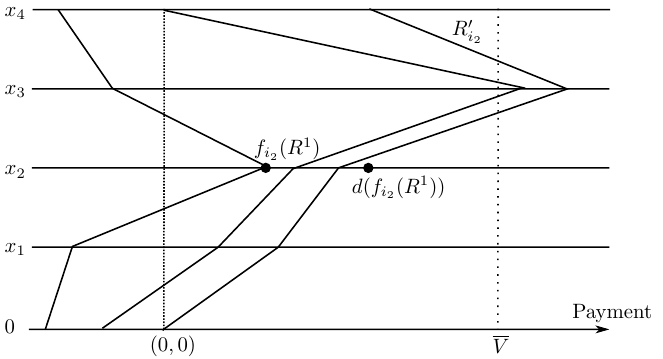}
\caption{Illustration of $R'_{i_2}$.}
\label{fig:Step_1}
\end{figure}

Figure~\ref{fig:Step_1} illustrates $R'_{i_k}$ for $k=2$. \\

\noindent {\bf \sl Proof of Step~\ref{pref}\/}:\enspace 
We inductively construct $\{i_1, \dots, i_m\}$ and $R_{\{i_1, \dots, i_m\}}'$. \\
\\
\textit{Induction base.} Let $i_1=i^*$ (the agent for which we have $t_{i^*}(R) < p_1^{\min}(R)$).
By $R^0=R$, we have $a_{i_1}(R^{0})=a_{i^*}(R)=x_1$.
Note that a preference relation satisfying ($1$-i), ($1$-ii), and ($1$-iii) exists if $t_{i_1}(R)< \overline{V}$.
But this immediately follows because, by individual rationality, $f_{i_1}(R)~R_{i_1}~(0,0)$, and by the definition of $\overline{V}$, $t_{i_1}(R)\le  V^{R_{i_1}}(x_1,(0,0))< \overline{V}$.
\\
\textit{Induction argument.} Let $k\in \{1,\dots, m-1\}$.
Assume that there exist $\{i_1, \dots, i_k\}$ of $k$ distinct agents
and $R_{\{i_1, \dots, i_k\}}'\in \mathcal{R}^k$ such that for each $\ell \in \{1,\dots ,k\}$, $
a_{i_{\ell}}(R^{\ell-1})=a_{\ell}$, and $R'_{i_{\ell}}$ satisfies ($\ell$-i), ($\ell$-ii), and ($\ell$-iii).
By \textsl{no wastage}, there is $i\in N$ such that
\begin{equation}
a_i(R^k)=x_{k+1}. \label{Step 1 x_{j+1}}
\end{equation}
\textsc{Claim:} \textit{$i\notin \{i_1, \dots, i_k\}$.} 
\\
\\
\begin{proof*}
 By contradiction, suppose $i\in \{i_1, \dots, i_k\}$.
By $a_{i_k}(R^{k-1})=x_k$, ($k$-i), and Lemma~\ref{sp and fav}, we have $f_{i_k}(R^k)=f_{i_k}(R^{k-1})$.
Thus, $i\neq i_k$.
Thus, there is $i_{\ell} \in \{i_1, \dots, i_{k-1}\}$ such that $i=i_{\ell}$.
By $\ell\le  k-1<k$, $x_{k+1}\notin \{x_{\ell},x_{\ell+1}\}$.
Thus,
\begin{align}
V^{R'_i}(x_{\ell+1},f_i(R^k))& \ge  V^{R'_i}(x_{\ell+1},(x_{k+1},0))  \tag*{by \eqref{Step 1 x_{j+1}} and \textsl{no subsidy}} \\
& =V^{R'_{i_{\ell}}}(x_{\ell+1},(x_{k+1},0))\hfill   \tag*{by $i=i_{\ell}$} \\
& >\overline{V}.   \tag*{by ($\ell$-iii) and $x_{k+1}\notin  \{x_{\ell},x_{\ell+1}\}$}
\end{align}

By Fact~\ref{domination}, $f_i(R^k)\mathrel{R'_i}(x_{\ell +1},p^{\min }_{\ell+1}(R^k))$.
This implies $V^{R'_i}(x_{\ell+1},f_i(R^k))\le  p^{\min }_{\ell+1}(R^k)$.
Thus, by $V^{R'_i}(x_{\ell+1},f_i(R^k))>\overline{V}$,
\begin{equation}\label{Step 1 more than V}
p^{\min }_{\ell+1}(R^k)>\overline{V}.
\end{equation}

By the definition of $\overline{V}$ and \eqref{Step 1 more than V}, for each $j\in N\setminus \{i_1, \dots, i_k\}$, $V^{R_j}(x_{\ell +1},(0,0))<\overline{V}<p_{\ell+1}^{\min}(R^k)$, which implies  $x_{\ell+1}\notin D(R_j,p^{\min }(R^k))$.
For each $i_{k'} \in \{i_1, \dots, i_k\}\setminus \{i_{\ell},i_{\ell+1}\}$
$$V^{R'_{i_{k'}}}(x_{\ell+1},(0,0))<\overline{V}<p_{\ell+1}^{\min}(R^k),$$
where the first inequality follows from $x_{\ell+1}\notin \{x_{k' },x_{k' +1}\}$ and ($k' $-ii), and the last inequality follows from \eqref{Step 1 more than V}.
Thus, for each $i_{k'}\in \{i_1, \dots, i_k\}\setminus \{i_{\ell},i_{\ell+1}\}$, $x_{\ell+1}\notin D(R'_{i_{k'} },p^{\min}(R^k))$.
Moreover, by ($\ell+1$-ii) and \eqref{Step 1 more than V}, we have $V^{R'_{i_{\ell+1}}}(x_{\ell+1},(0,0))<\overline{V}<p_{\ell+1}^{\min}(R^k)$, implying $x_{\ell+1}\notin D(R'_{i_{\ell+1}},p^{\min }(R^k))$.
Therefore,
\begin{equation*}
|\{j\in N:D(R_j^k,p^{\min }(R^k))\cap
\{x_{\ell+1}\}\neq \emptyset \}|\le  1.
\end{equation*}
This and Fact~\ref{positive} imply that $\{x_{\ell+1}\}$ is weakly underdemanded at $p^{\min}(R^k)$ for $R^k$, contradicting Fact~\ref{overdemand}.
\end{proof*}

Let $i_{k+1}=i$.
By Claim and induction hypothesis, the agents in $\{i_1,\dots, i_k, i_{k+1}\}$ are distinct.
By \eqref{Step 1 x_{j+1}}, $a_{i_{k+1}}(R^k)=x_{k+1}$.
Note that a preference relation satisfying ($k+1$-i), ($k+1$-ii), and ($k+1$-iii) exists if $t_{i_{k+1}}(R^k)<\overline{V}$.
But this immediately follows because, by individual rationality, $f_{i_{k+1}}(R^k)~R_{i_{k+1}}~(0,0)$, and by the definition of $\overline{V}$, $t_{i_{k+1}}(R^k)\le  V^{R_{i_{k+1}}}(x_{k+1},(0,0))< \overline{V}$.
This completes the proof of Step~\ref{pref}.
\hfill $\blacksquare{}$ \vspace{12pt}

Without loss of generality, assume $\{i_1, \dots, i_m\}=\{1, \dots, m\}$.
For each $m'\in \{1,\dots ,m\}$, let $N(m')\equiv \{1,\dots, m'\}$ and $M(m')\equiv \{x_1,\dots ,x_{m'}\}$.
For each $m'\in \{0, 1,\dots ,m\}$, let $p(m')\equiv p^{\min }(R^{m'})$.

\setcounter{equation}{0}
\begin{step}
\label{less than V}
Let $m'\in \{1,\dots ,m\}$ and $x_i\in M$.
Then, $p_i(m')<\overline{V}$.
\end{step}
\begin{figure}[t!]
\centering
\includegraphics[width=0.8\linewidth]{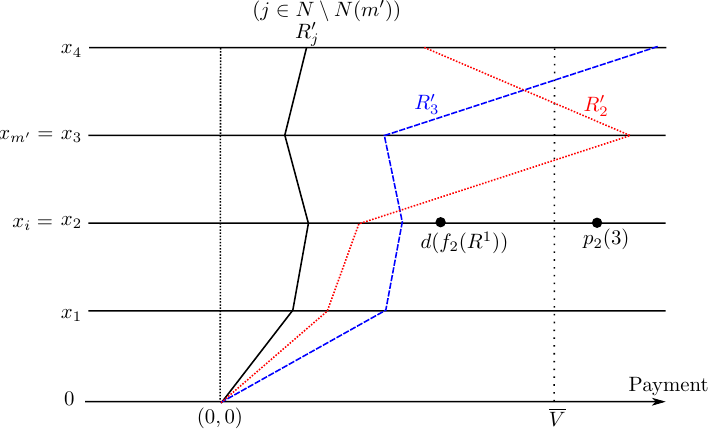}
\caption{Illustration of the proof of Step~\ref{less than V} for $m'=3$ and $i=2$.}
\label{fig:Step_2}
\end{figure}
\noindent {\bf \sl Proof of Step~\ref{less than V}\/}:\enspace 
(See Figure~\ref{fig:Step_2} for illustration.)
 Suppose by contradiction that $p_i(m')\ge
\overline{V}$.
By the definition of $\overline{V}$, for each $j\in N\setminus N(m')$, $V^{R_j}(x_i,(0,0))<\overline{V}\le
p_i(m')$, which implies $(0,0)~P_j~(x_i,p_i(m'))$.
Thus,
\begin{align}
\label{Step 2 j N/N(m')}
x_i\notin D(R_j,p(m'))\text{ for each }j\in N\setminus N(m').
\end{align}

For each $j\in N(m')\setminus \{i-1,i\}$, by $x_i\in M\setminus \{x_j,x_{j+1}\}$ and ($j$-ii) in Step~\ref{pref}, $V^{R'_j}(x_i,(0,0))<\overline{V}\le  p_i(m')$, which implies $(0,0)~P'_j~(x_i,p_i(m'))$.
Thus,
\begin{align}
\label{Step 2 j N(m')/{i-1, i}}
x_i\notin D(R'_j,p(m'))\text{ for each }j\in N(m')\setminus \{i-1,i\}.
\end{align}

Suppose $i\in N(m')$.
By ($i$-ii) in Step~\ref{pref} and the fact that $d(f_i(R^{i-1}))\le  \overline{V}$, we get $V^{R'_i}(x_i,(0,0))< d(f_i(R^{i-1}))\le  \overline{V}\le  p_i(m')$, which implies $(0,0)%
\,P_i'\,(x_i,p_i(m'))$.
Thus,
\begin{align}
\label{Step 2 i}
x_i\notin D(R'_i,p(m')).
\end{align}
Thus, by \eqref{Step 2 j N/N(m')}, \eqref{Step 2 j N(m')/{i-1, i}}, and \eqref{Step 2 i},
\begin{equation*}
|\{j\in N:D(R_j^{m'},p(m'))\cap \{x_i\}\neq
\emptyset \}|\le  1.
\end{equation*}%
If $i\notin N(m')$, this inequality is immediately implied by \eqref{Step 2 j N/N(m')} and \eqref{Step 2 j N(m')/{i-1, i}}.

Hence, in either case, $\{x_i\}$ is a weakly underdemanded set at $p(m')$ for $
R^{m'}$. This contradicts Fact~\ref{overdemand}.
This completes the proof of Step~\ref{less than V}.
\hfill $\blacksquare{}$ \vspace{12pt}
\setcounter{equation}{0}
\begin{step}
\label{N(m')'s demand}
Let $m'\in \{1,\dots, m\}$, $((a_i,t_i))_{i\in N}\in Z^{\min}(R^{m'})$, and $i\in N(m')$.
Then, the following properties hold.
\begin{enumerate}
\renewcommand{\theenumi}{\roman{enumi}}
\renewcommand{\labelenumi}{(\theenumi)}
\item (Twin demand property.)
$D(R'_i,p(m'))\subseteq \{x_i,x_{i+1}\}$, and thus, $a_i\in \{x_i, x_{i+1}\}$.
\item (Unique demand property.) If $x_i\notin M_+(m')$, $D(R'_i,p(m'))=\{x_i\}$, and thus, $a_i = x_i$.
\end{enumerate}
\end{step}
\noindent {\bf \sl Proof of Step~\ref{N(m')'s demand}\/}:\enspace 
We show the first property.
By Step~\ref{less than V} and ($i$-ii) in Step~\ref{pref}, $p_{i+1}(m')<\overline{V}<V^{R'_i}(x_{i+1},(0,0))$, which implies $(x_{i+1},p_{i+1}(m'))\,P_i'\,(0,0)$.
Thus, $0\notin D(R'_i,p(m'))$.

Let $a\in M\setminus \{x_i,x_{i+1}\}$. By Step~\ref{less than V}, ($i$-iii) in Step~\ref{pref}, and $
p_a(m')\ge  0$, we have
\begin{equation*}
p_{i+1}(m')<\overline{V}<V^{R'_i}(x_{i+1},(a,0))\le
V^{R'_i}(x_{i+1},(a,p_a(m'))),
\end{equation*}%
which implies $(x_{i+1},p_{i+1}(m'))\,P_i'\,(a,p_a(m'))$. Thus, $a\notin D(R'_i,p(m'))$
. Hence, $D(R'_i,p(m'))\subseteq \{x_i,x_{i+1}\}$, and this immediately implies $a_i\in \{x_i, x_{i+1}\}$.

Next, we show the second property. Suppose $x_i\notin M_+(m')$.
Then, $p_i(m')\le t_i(R^{i-1})$.
Since $R'_i$ is $f_i(R^{i-1})$-favoring, for each $a\in L\setminus \{x_i\}$, 
$$
V^{R'_i}(a,(x_i, p_i(m')))\le  V^{R'_i}(a, f_i(R^{i-1}))<0\le  p_a(m'),
$$ 
implying that $(x_i,p_i(m'))~P'_i~(a,p_a(m'))$.
Therefore, for each $a\in L\setminus \{x_i\}$, $a\notin D(R'_i,p(m'))$.
Hence, $D(R'_i,p(m'))=\{x_i\}$, and this immediately implies $a_i=x_i$.
\hfill $\blacksquare{}$ \vspace{12pt} 



Now we introduce two notations.
Given $m'\in \{1,\dots ,m\}$, let
\begin{align*}
& M_+(m')\equiv \{x_k\in M(m'):p_k(m')>t_k(R^{k-1})\}\text{, and} \\
& M_{++}(m')\equiv \{x_k\in M(m'):p_k(m')>V^{R'_k}(x_k,(0,0))\}.
\end{align*}
\begin{figure}[!hbt]
\centering
\includegraphics[width=0.8\linewidth]{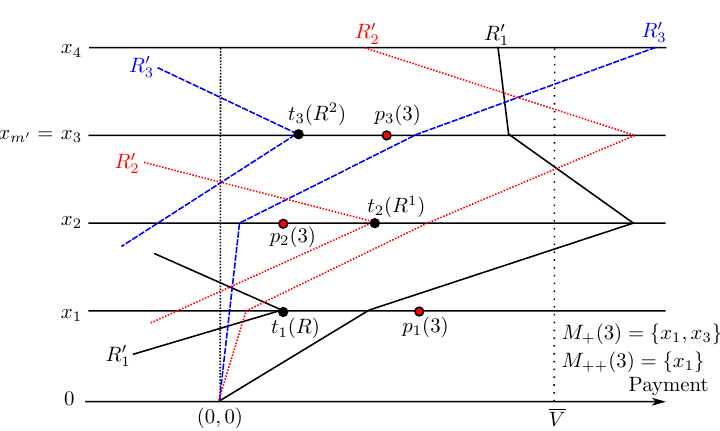}
\caption{$M_+(3)$ and $M_{++}(3)$.}
\label{fig:M_+(m')}
\end{figure}

Figure~\ref{fig:M_+(m')} illustrates $M_+(m')$ and $M_{++}(m')$ for $m'=3$.
In this figure, $p_1(3)>V'_1(x_1,(0,0))$, $p_2(3)<t_2(R^1)$, and $t_3(R^2)<p_3(3)<V'_3(x_3,(0,0))$.
Thus, $M_+(3)=\{x_1,x_3\}$ and $M_{++}(3)=\{x_1\}$.

Note that for each $m'\in \{1,\dots ,m\}$, $M_{++}(m')\subseteq M_+(m')$.
To see this, let $m'\in \{1,\dots ,m\}$ and $x_k\in M_{++}(m')$.
Then, $p_k(m')>V^{R'_k}(x_k,(0,0))$.
By ($k$-i) in Step~\ref{pref}, $R'_k$ is $f_k(R^{k-1})$-favoring.
Thus, $t_k(R^{k-1})<V^{R'_k}(x_k, (0,0))<p_k(m')$.
Thus, $x  _k\in M_+(m')$.



\setcounter{equation}{0}
\begin{step}[Outside demander]
\label{i demands M_+}
Let $m'\in \{1,\dots ,m\}$ be such that $
M_+(m')\neq \emptyset $. Then, there is $i\in N\setminus
N(m')$ such that $D(R_i,p(m'))\cap M_+(m')\neq \emptyset $.
\end{step}
\noindent {\bf \sl Proof of Step~\ref{i demands M_+}\/}:\enspace 
Suppose for contradiction that 
\begin{equation}
\label{Step 4 in Nsetminus N(m')}
D(R_i,p(m'))\cap M_+(m')=\emptyset \text{ for each } i\in N\setminus N(m').
\end{equation}
By Step~\ref{N(m')'s demand}~(ii) (Unique demand property),
\begin{equation}
\label{Step 4 N(m')}
\{i\in N(m'):x_i\notin M_+(m')\}\subseteq \{i\in
N(m'):D(R'_i,p(m'))\cap M_+(m')=\emptyset \}.
\end{equation}
By \eqref{Step 4 in Nsetminus N(m')} and \eqref{Step 4 N(m')},
\begin{align}
 |\{i\in N:&D(R_i^{m'},p(m'))\cap M_+(m')\neq \emptyset \}|  \notag \\
=& |\{i\in N(m'):D(R'_i,p(m'))\cap
M_+(m')\neq \emptyset \}|  \tag*{by \eqref{Step 4 in Nsetminus N(m')}} \\
=& |N(m')|-|\{i\in N(m'):D(R'_i,p(m'))\cap M_+(m')=\emptyset \}|  \notag \\
\le  & |N(m')|-|\{i\in N(m'):x_i\notin M_+(m')\}|  \tag*{by \eqref{Step 4 N(m')}} \\
=& |M(m')|-|M(m')\setminus M_+(m')| \notag \\ 
=& |M_+(m')|.
\tag*{by
$M_+(m') \subseteq M(m')$}
\end{align}
Thus, this inequality and Fact~\ref{positive} imply that $M_+(m')$ is weakly underdemanded at $p(m')$ for $
R^{m'}$. This contradicts Fact~\ref{overdemand}.
\hfill $\blacksquare{}$ \vspace{12pt} 

\setcounter{equation}{0}
\begin{step}
\label{property of IRIC sequence}
Let $m'\in \{1,\dots ,m\}$.
Let $\{i_1,\dots ,i_K\}\subseteq N\setminus N(m')$ and $
\{b_1,\dots ,b_K\}\subseteq M$ be such that\\
(a) $b_1\in M_+(m')\setminus M_{++}(m')$,\\
(b) for each $k\in \{1,\dots ,K-1\}$, $\{b_k,b_{k+1}\}\subseteq
D(R_{i_k},p(m'))$, and\\
(c) $b_K\in D(R_{i_K},p(m'))$.\\
Then, $0\notin D(R_{i_K},p(m'))$.
\end{step}
\noindent {\bf \sl Proof of Step~\ref{property of IRIC sequence}\/}:\enspace 
 By (b), for each $k\in \{1,\dots ,K-1\}$, $(b_k,p_{b_k}(m'))\,I_{i_k}\,(b_{k+1},p_{b_{k+1}}(m'))$.
 Thus,
 \begin{align}
 \label{Step 5 p^{b_k}=V_{i_{k-1}}}
 p_{b_{k+1}}(m')=V^{R_{i_k}}(b_{k+1},(b_k,p_{b_k}(m')))\text{ for each }k\in \{1,\dots ,K-1\}.
 \end{align}

 We first show that $(\{i_1,\dots, i_K\}, \{b_1,\dots, b_K\})$ is an IRIC from $f_j(R^{j-1})$. 
 By $b_1\in M_+(m')$, there is $j\in \{1,\dots
,m'\}$ such that $b_1=x_j$. 
Let $r_1\equiv t_j(R^{j-1})$,
and for each $k\in \{2,\dots ,K\}$, let $r_k\equiv
V^{R_{i_{k-1}}}(b_k,(b_{k-1},r_{k-1}))$.
Then, $(b_1, r_1)=f_j(R^{j-1})$, and for each $k\in \{1,\dots, K-1\}$, $(b_k, r_k)~I_{i_k}(b_{k+1}, r_{k+1})$.
\\
\\
\textsc{Claim:} \textit{For each $k\in \{1,\dots ,K\}$, $
r_k<p_{b_k}(m')$.} \\
\\
\begin{proof*} The proof is by induction.\\
\textit{Induction base.} Let $k=1$. By (a) and $b_1=x_j$, $
r_1=t_j(R^{j-1})<p_j(m')$. \\
\textit{Induction argument.} Let $k\ge 1$ and assume that $r_k<p_{b_k}(m')$.
Then,
\begin{align}
p_{b_{k+1}}(m')& =V^{R_{i_k}}(b_{k+1},(b_k,p_{b_k}(m')))  \tag*{ by \eqref{Step 5 p^{b_k}=V_{i_{k-1}}}} \\
& >V^{R_{i_k}}(b_{k+1},(b_k,r_k))
\tag*{by
$r_k<p_{b_k}(m' )$} \\
& =r_{k+1}  \notag
\end{align}
\end{proof*}
 
By (b) and (c), for each $k\in \{1,\dots ,K\}$, $(b_k,p_{b_k}(m'))\,R_{i_k}\,(0,0)$. Thus, by Claim, for each $k\in \{1,\dots
,K\}$,
$(b_k,r_k)\,P_{i_k}\,(b_k,p_{b_k}(m'))\,R_{i_k}\,(0,0)$.
Therefore, $(\{i_1,\dots, i_K\}, \{b_1,\dots, b_K\})$ is an IRIC from $f_j(R^{j-1})$.

By (a) and ($j$-ii) in Step~\ref{pref},
\begin{equation*}
\label{Step 5 less than V'_j}
p_{b_1}(m')\le  V^{R'_j}(x_j,(0,0))<d(f_j(R^{j-1})).
\end{equation*}
Therefore, by the definition of $d(f_j(R^{j-1}))$, $(\{i_1,\dots, i_K\}, \{b_1,\dots, b_K\})$ is an IRIC from $(x_j, p_j(m'))$.
By \eqref{Step 5 p^{b_k}=V_{i_{k-1}}}, the corresponding sequence of bundles is $\{(b_k, p_{b_k}(m')\}_{k=1}^K$.
Therefore, 
$$
(b_K, p_{b_K}(m'))~P_{i_K}(0, 0).
$$
Hence, $0\notin D(R_{i_K}, p(m'))$, and this completes the proof of Step~\ref{property of IRIC sequence}.

\hfill $\blacksquare{}$ \vspace{12pt} 
\setcounter{equation}{0}
\begin{step}[Outside receiver I]
\label{x_ in M_+(m')}
Let $m'\in \{1,\dots ,m-1\}$ be such that $
M_+(m')\neq \emptyset $ and $M_{++}(m')=\emptyset $. Let
$((a_i,t_i))_{i\in N}\in Z^{\min}(R^{m'})$. Then, there exists $i\in N\setminus N(m')$ such that $a_i\in M_+(m')$.
\end{step}
\noindent {\bf \sl Proof of Step~\ref{x_ in M_+(m')}\/}:\enspace 
Suppose for contradiction that for each $i\in N\setminus N(m')$, $a_i\notin M_+(m')$.
By Step~\ref{N(m')'s demand}~(ii) (Unique demand property), for each $x_j\in M(m')\setminus M_+(m')$, $x_j=a_j$.
Thus, 
\begin{equation}
\label{Step 6 a_i notin M(m')}
\{i\in N\setminus N(m'):a_i\in M(m')\}=\emptyset .
\end{equation}
By Step~\ref{N(m')'s demand}~(i) (Twin demand property), $m'<m$ and \eqref{Step 6 a_i notin M(m')},~\footnote{The proof is as follows. By no wastage, there is $i\in N$ such that $a_i=x_1$. By \eqref{Step 6 a_i notin M(m')}, $i\in N(m')$. By Step~\ref{N(m')'s demand}~(i) (Twin demand property) and $m'<m$, $i\notin \{2,\dots, m'\}$.
Thus, $i=1$.
Then, by Step~\ref{N(m')'s demand}~(i) (Twin demand property), we can inductively show that $a_2=x_2$, $a_3=x_3$,..., $a_{m'}=x_{m'}$.}
\begin{align}
\label{Step 6 a_i=x_i}
a_i=x_i\text{ for each }i\in N(m').
\end{align}

By Step~\ref{i demands M_+} (Outside demander) and $M_+(m')\neq \emptyset $, there exist $i\in
N\setminus N(m')$ and $x_j\in M_+(m')$ such that $
x_j\in D(R_i,p(m'))$. By Fact~\ref{path}, there is a sequence $
\{i_k\}_{k=1}^K$ of $K$ distinct agents such that\footnote{Note that it is possible that $a_i=0$.
In this case, the sequence $\{i_k\}_{k=1}^K$ consists only of agent $i$. That is, $K=1$ and $\{i_k\}_{k=1}^K=\{i\}$.}
\begin{align}
\label{Step 6 i_1}
& i_1=i,  \\
\label{Step 6 i_K}& a_{i_K}=0\text{ and } \text{for each }k\in \{1,\dots, K-1\},\ a_{i_k}\ne 0\text{ and}   \\
\label{Step 6 dc}
& \text{for each }k\in \{2,\dots ,K\},\{a_{i_{k-1}},a_{i_k}\}\subseteq D(R_{i_k}^{m'},p(m')).
\end{align}

\begin{figure}[t!]
\centering
\includegraphics[width=0.8\linewidth]{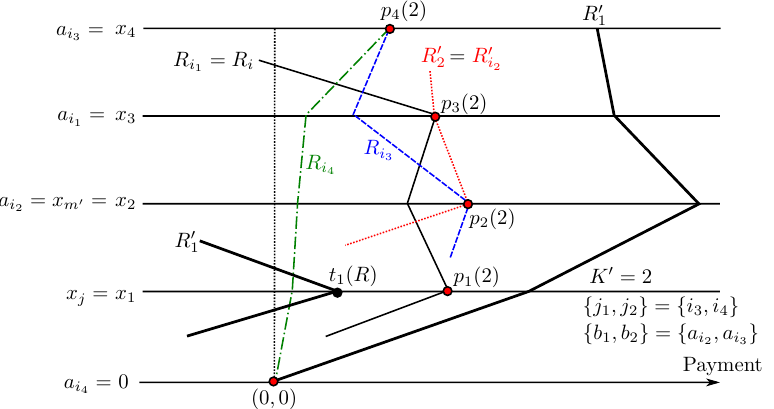}
\caption{$\{i_k\}_{k=1}^K$ in the proof of Step 6.}
\label{fig:Step6}
\end{figure}

Figure~\ref{fig:Step6} illustrates the sequence $
\{i_k\}_{k=1}^K$, for $m'=2$, $K=4$, and $x_j=x_1$.

Let $i_0\equiv j$.
By $x_j\in M(m')$, 
$i_0\in N(m')$.
By \eqref{Step 6 a_i=x_i} and \eqref{Step 6 i_K}, $i_K\notin N(m')$.
Thus, there is $K'\in \{0,1,\dots ,K-1\}$ such that
\begin{align}
\label{Step 6 i_{K'} in N(m')}
& i_{K'}\in N(m'),\text{ and}   \\
& \{i_{K'+1},\dots ,i_K\}\cap N(m')=\emptyset .  \label{Step 6 disjoint}
\end{align}

Let $\{j_1,\dots ,j_{K-K'}\}\subseteq N$ and $\{b_1,\dots ,b_{K-K'}\}\subseteq M$ be such that for each $k\in \{1,\dots ,K-K'\}$,
\begin{equation*}
j_k=i_{K'+k}\text{ and }b_k=a_{i_{K'+k-1}}.
\end{equation*}
In Figure~\ref{fig:Step6}, $K'=2$ and $K-K'=2$.
Thus, $\{j_1,j_2\}=\{i_3,i_4\}$ and $\{b_1,b_2\}=\{a_{i_2}, a_{i_3}\}=\{x_2, x_4\}$.
\\
\\
\textsc{Claim:}
\textit{$\{j_1,\dots ,j_{K-K'}\}\subseteq N\setminus N(m')$, $\{b_1, \dots, b_{K-K'}\}\subseteq M$, and the pair $(\{j_k\}_{k=1}^{K-K'},\{b_k\}_{k=1}^{K-K'})$
satisfies the following conditions.
\\
(a) $b_1\in M_+(m')\setminus M_{++}(m')$,\\
(b) for each $k\in \{1,\dots ,K-K'-1\}$, $\{b_k,b_{k+1}\}\subseteq D(R_{j_k},p(m'))$, and\\
(c) $b_{K-K'}\in D(R_{j_{K-K'}},p(m'))$.
}
\\
\\
\begin{proof*}
By $\{j_1,\dots ,j_{K-K'}\}=\{i_{K'+1},\dots, i_K\}$ and \eqref{Step 6 disjoint}, $\{j_1,\dots ,j_{K-K'}\}\subseteq N\setminus N(m')$.
By Step~\ref{N(m')'s demand}~(i) (Twin demand property) and $i_0=j\in N(m'),$ $a_{i_0}\ne 0$.
By $\{b_1,\dots, b_{K-K'}\}=\{a_{i_{K'}},\dots, a_{i_{K-1}}\}$, $a_{i_0}\ne 0$, and \eqref{Step 6 dc}, $\{b_1,\dots, b_{K-K'}\}\subseteq M$.
\\
\textsl{Proof of (a):}
Note that by $M_{++}(m')=\emptyset$, (a) is equivalent to $b_1\in M_+(m')$.
If $K'=0$, then by \eqref{Step 6 a_i=x_i}, $b_1=a_{i_0}=a_j=x_j\in M_+(m')$.

Suppose $K'\ge  1$.
By $i_1=i\notin N(m')$ and \eqref{Step 6 i_{K'} in N(m')}, $K'>1$.
Then, by \eqref{Step 6 dc},
$$|D(R'_{i_{K'}},p(m'))|\ge  2.$$
By \eqref{Step 6 i_{K'} in N(m')} and \eqref{Step 6 a_i=x_i},
$$b_1=a_{i_{K'}}=x_{i_{K'}}.$$
Thus, Step~\ref{N(m')'s demand}~(ii) (Unique demand property) implies $b_1\in M_+(m')$.
\\
\textsl{Proof of (b):}
Let $k\in \{1,\dots ,K-K'-1\}$.
First, suppose $K'=0$ and $k=1$.
Note that $\{b_1,b_2\}=\{a_{i_0}, a_{i_1}\}$.
By $i_0=j$, $j\in N(m')$, and \eqref{Step 6 a_i=x_i}, $a_{i_0}=x_j$.
Thus, by \eqref{Step 6 dc},
$$\{b_1,b_2\}=\{x_j,a_i\}\subseteq  D(R_i,p(m'))=D(R_{j_1},p(m')).$$

Next, suppose either $K'\neq 0$ or $k\neq 1$.
In both cases, we have $K'+k>1$.
Thus, $j_k=i_{K'+k}\neq i_1$.
Therefore, by \eqref{Step 6 dc},
$$\{b_k, b_{k+1}\}=\{a_{i_{K'+k-1}},a_{i_{K'+k}}\}\subseteq D(R_{i_{K'+k}},p(m'))=D(R_{j_k},p(m')).$$
\textsl{Proof of (c):}
By $j_{K-K'}=i_K$, $b_{K-K'}=a_{i_{K-1}}$, and \eqref{Step 6 dc},
$$b_{K-K'}=a_{i_{K-1}}\in D(R_{i_K},p(m'))=D(R_{j_{K-K'}},p(m')).$$
\end{proof*}

By Step~\ref{property of IRIC sequence} and Claim, $0\notin D(R_{j_{K-K'}},p(m'))=D(R_{i_K},p(m'))$.
This contradicts \eqref{Step 6 i_K}.
This completes the proof of Step~\ref{x_ in M_+(m')}.
\hfill $\blacksquare{}$ \vspace{12pt} 
\setcounter{equation}{0}
\begin{step}[Outside receiver II]
\label{x_i in M_+(k)}
Let $m'\in \{1,\dots ,m-1\}$ be such that $M_{++}(m')\neq \emptyset$.
Let $x_k\in M_{++}(m')$ and $((a_i,t_i))_{i\in N}\in Z^{\min}(R^{m'})$.
Then, there exists $i\in N\setminus N(m')$ such that $a_i\in M(k)\cap M_+(m')$.
\end{step}
\noindent {\bf \sl Proof of Step~\ref{x_i in M_+(k)}\/}:\enspace 
 Suppose for contradiction that for each $i\in N\setminus N(m')$, $a_i\notin M(k)\cap M_+(m')$. 
By Step~\ref{N(m')'s demand}~(ii) (Unique demand property), for each $j\in N(k)$ with $x_j\notin M_+(m')$, $a_j=x_j$.
 Thus,
\begin{equation}
\label{x_i not in M(k)}
\{i\in N\setminus N(m'):a_i\in M(k)\}=\emptyset .
\end{equation}

Note that by $k\ge  1$, $x_1\in M(k)$. Thus by \eqref{x_i not in M(k)}, for each $i\in N\setminus N(m')$, $a_i\neq x_1$.
By $m'<  m$ and Step~\ref{N(m')'s demand}~(i) (Twin demand property), for each $i\in N(m')\setminus \{1\}$, $a_i\neq x_1$.
Thus, by \textsl{no wastage}, we conclude that $a_1=x_1.$
By using \eqref{x_i not in M(k)} and Step~\ref{N(m')'s demand}~(i) (Twin demand property) repeatedly, we obtain
$$a_2=x_2,\ a_3=x_3,\ \dots,\ a_k=x_k.$$

By $x_k\in M_{++}(m')$, $p_k(m')>V^{R'_k}(x_k,(0,0))$. Thus,
\begin{equation*}
(0,0)\,I_k'\,(x_k,V^{R'_k}(x_k,(0,0)))~P_k'~(x_k,p_k(m')).
\end{equation*}
This implies $x_k\notin D(R_k',p(m'))$, contradicting $a_k=x_k$. 
\hfill $\blacksquare{}$ \vspace{12pt} 
\setcounter{equation}{0}
\begin{step}
\label{p^{m'+1}>t_{m'+1}}
Let $m'\in \{0,\dots ,m-1\}$ be such that $M_+(m')\neq \emptyset $.
Let $((a_i,t_i))_{i\in N}\in Z^{\min}(R^{m'})$.
Let $\{i_k\}_{k=1}^K$ be a sequence of $K$ distinct agents such that
\begin{align}
& i_1=m'+1,   \label{Step 8 i_1}\\
& a_{i_K}=0\text{ and } \text{for each }k\in \{1,\dots, K-1\},\ a_{i_k}\ne 0\text{ and} \label{Step 8 i_K} \\
& \text{for each }k\in \{2,\dots ,K\},\{a_{i_{k-1}},a_{i_k}\}\subseteq
D(R_{i_k}^{m'},p(m')). \label{Step 8 dc}
\end{align}%
Suppose $m'\in \{i_1,\dots ,i_K\}$.\footnote{Note that this implies $K>1$.} Then $p_{m'+1}(m')>t_{m'+1}(R^{m'})$.
\end{step}
\noindent {\bf \sl Proof of Step~\ref{p^{m'+1}>t_{m'+1}}\/}:\enspace 
This step will be proved using five claims, which we state and prove as we go along the proof of this step. Assume for contradiction that $p_{m'+1}(m')\le  t_{m'+1}(R^{m'})$.
By $M_+(m')\neq \emptyset $, and Steps \ref{x_ in M_+(m')} (Outside receiver I) and \ref{x_i in M_+(k)} (Outside receiver II), there exist $i\in N\setminus N(m')$ and $x_{\ell}\in M_+(m')$ such that~\footnote{If $M_{++}(m')=\emptyset$, then Step \ref{x_ in M_+(m')} (Outside receiver I) implies there is $i\in N\setminus N(m')$ such that $x_i\in M_+(m')$. If there is $x_k\in M$ such that $x_k\in M_{++}(m')$, then Step \ref{x_i in M_+(k)} (Outside receiver II) implies that there is $i\in N\setminus N(m')$ such that $x_i\in M(k)\cap M_+(m')\subseteq M_+(m')$.}
\begin{equation}
\label{Step 8 x_ in M_+}
a_i=x_{\ell}.
\end{equation}
Then, by using Step~\ref{N(m')'s demand}~(i) (Twin demand property) repeatedly, we obtain
\begin{align}
\label{Step 8 a_k=x_{k+1}}
a_{\ell}=x_{\ell +1},\ a_{\ell +1}=x_{\ell +2},\ \dots,\ a_{m'}=x_{m'+1}.
\end{align}
\textsc{Claim 1:}
\textit{Let $k\in \{1,\dots ,K-1\}$ be such that $i_k\notin N(m')$ and $i_{k+1}\in N(m')$. Then $i_k=i$.}
\\
\\
\begin{proof*}
By $i_{k+1}\in N(m')$, \eqref{Step 8 dc}, and Step~\ref{N(m')'s demand}~(i) (Twin demand property), $a_{i_k}\in D(R'_{i_{k+1}},p(m'))\subseteq M(m'+1)$.
By Step~\ref{N(m')'s demand}~(i) (Twin demand property) and $a_i\in M_+(m')$,
$\{a_j:j\in N(m')\cup \{i\}\}=M(m'+1)$.
Thus, by $i_k\notin N(m')$, $i_k=i$.
\end{proof*} \\

By $m'\in \{i_1,\dots ,i_K\}$, $\{i_1,\dots ,i_K\}\cap N(m')\neq \emptyset $.
Thus, by \eqref{Step 8 i_1}, there is $K'\in \{1,\dots ,K-1\}$ such that
\begin{equation*}
\{i_1,\dots ,i_{K'}\}\cap N(m')=\emptyset \text{ and }%
i_{K'+1}\in N(m').
\end{equation*}%
By Claim 1,
\begin{equation}
\label{Step 8 i_{K'}=i}
i_{K'}=i.
\end{equation}
By $m'\in \{i_1,\dots ,i_K\}$, there is $K''\in \{1,\dots ,K\}$ such that
$$i_{K''}=m'.$$
By $\{i_1,\dots ,i_{K'}\}\cap N(m')=\emptyset $ and $i_{K''}=m'\in N(m')$, we have $ K''>K'$.
Note that by $a_{i_{K''}}=x_{m'+1}$ and \eqref{Step 8 i_K}, $K''<K$.
Therefore,
\begin{equation*}
K'<K''<K.
\end{equation*}
\begin{figure}[!hbt]
\centering
\includegraphics[width=0.9\linewidth]{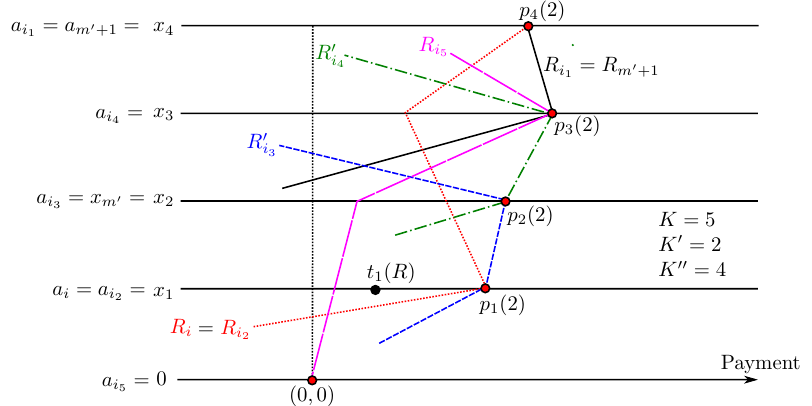}
\caption{Illustration of $\{i_k\}_{k=1}^K$ for $m'=2$ and $K=5$.}
\label{fig:Step8_1}
\end{figure}
Figure~\ref{fig:Step8_1} is an illustration of the sequence $\{i_k\}_{k=1}^K$ for $m'=2$ and $K=5$.
In this figure, $K'=2$ and $K''=4$.
\\
\\
\textsc{Claim 2:} \textit{$\{i_{K''+1},\dots ,i_K\}\cap
N(m')=\emptyset $.} \\
\\
\begin{proof*}
First, we show $i_{K''+1}\notin N(m')$.
Suppose for contradiction $i_{K''+1}\in N(m')$.
By $i_{K''}=m'$, $i_{K^{\prime\prime }+1}\in N(m'-1)$.
Thus, by Step~\ref{N(m')'s demand}~(i) (Twin demand property), $x_{m'+1}\notin D(R'_{i_{K''+1}},p(m'))$.
However, by \eqref{Step 8 a_k=x_{k+1}} and \eqref{Step 8 dc},
\begin{equation*}
x_{m'+1}=a_{i_{K''}}\in D(R'_{i_{K''+1}},p(m')),
\end{equation*}
a contradiction.
Hence, $i_{K''+1}\notin N(m')$.

Now, suppose for contradiction that $\{i_{K''+1},\dots ,i_K\}\cap N(m')\neq \emptyset $.
Then, by $i_{K''+1}\notin N(m')$, there is $K^*\in \{K''+1,\dots ,K-1\}$ such that $i_{K^*}\notin N(m')$ and $i_{K^*+1}\in N(m')$.
By Claim 1, we have $i_{K^*}=i$.
By \eqref{Step 8 i_{K'}=i}, $i_{K^*}=i_{K'}$.
Since agents $\{i_1,\dots, i_K\}$ are distinct, $K^*=K'$.
However, this is a contradiction because $K'<K''<K^*$.
\end{proof*}
\\
\textsc{Claim 3:}
\textit{$\{\ell, \dots, m'\}\subseteq \{i_{K'+1},\dots, i_{K''}\}$.}
\\
\\
\begin{proof*}
Suppose for contradiction that there is $k\in \{\ell, \dots, m'\}$ such that $k\notin \{i_{K'+1},\dots, i_{K''}\}$.
Without loss of generality, we can let $k$ be the largest number in $\{\ell, \dots, m'\}$ such that $k\notin \{i_{K'+1},\dots, i_{K''}\}$.
By $i_{K''}=m'$, $k<m'$.
Thus, there is $K^*\in \{K'+1,\dots, K''\}$ such that $i_{K^*}=k+1$.

By \eqref{Step 8 dc} and Step~\ref{N(m')'s demand}~(i) (Twin demand property),
$$\{a_{i_{K^*-1}},a_{i_{K^*}}\}\subseteq D(R^{m'}_{i_{K^*}},p(m'))= D(R'_{k+1},p(m'))\subseteq \{x_{k+1},x_{k+2}\}.$$
Further, by \eqref{Step 8 a_k=x_{k+1}}, $a_{i_{K^*}}=x_{k+2}$.
Thus, $a_{i_{K^*-1}}=x_{k+1}$.
Hence, by \eqref{Step  8 a_k=x_{k+1}}, $i_{K^*-1}=k$.

By $k\notin \{i_{K'+1},\dots, i_{K''}\}$ and $K^*\in \{K'+1,\dots, K''\}$, $K^*-1=K'$.
This and \eqref{Step 8  i_{K'}=i} imply that $i=k$.
However, this is a contradiction since $i\notin N(m')$ and $k\in N(m')$.
\end{proof*}
\\
\begin{figure}[!hbt]
\centering
\includegraphics[width=0.7\linewidth]{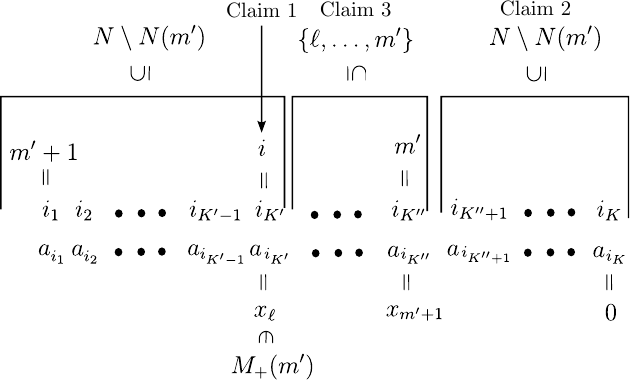}
\caption{Properties of $\{i_k\}_{k=1}^K$ and $\{a_k\}_{k=1}^K$.}
\label{fig:Step8_2}
\end{figure}
Figure~\ref{fig:Step8_2} summarizes the properties of the sequences $\{i_k\}_{k=1}^K$ and $\{a_k\}_{k=1}^K$, which we have uncovered so far in the proof.
\\
\\
\textsc{Claim 4:}
\textit{$M_{++}(m')=\emptyset$.}
\\
\\
\begin{proof*}
Suppose for contradiction that $M_{++}(m')\neq \emptyset$.
Let $ x_{k^*}\in M_{++}(m')$.
Note that Step~\ref{N(m')'s demand}~(i) (Twin demand property), there is at most one agent $j\in N\setminus N(m')$ such that $a_j\in M(m')$.
Thus, by $a_i\in M(m')$, $i\in N\setminus N(m')$, and Step~\ref{x_i in M_+(k)} (Outside receiver II), $a_i\in M(k^*)\cap M_+(m')$.
Therefore, by $a_i=x_{\ell}$, $\ell\le  k^*$.

By Claim 3, there is $K^*\in \{K'+1,\dots, K''\}$ such that $i_{K^*}=k^*$.
By \eqref{Step 8 dc} and Step~\ref{N(m')'s demand}~(i) (Twin demand property),
$$\{a_{i_{K^*-1}},a_{i_{K^*}}\}\subseteq D(R^{m'}_{i_{K^*}},p(m'))= D(R'_{k^*},p(m'))\subseteq \{x_{k^*},x_{k^*+1}\},$$
which implies $\{a_{i_{K^*-1}},a_{i_{K^*}}\}=\{x_{k^*},x_{k^*+1}\}$.
Thus, $x_{k^*}\in D(R'_{k^*},p(m'))$.
However, by $x_{k^*}\in M_{++}(m')$, $p_{k^*}(m')>V^{R'_k}(x_{k^*},(0,0))$.
This implies $(0,0)~P'_{k^*}~(x_{k^*},p(m'))$, and thus, $x_{k^*}\notin D(R'_{k^*},p(m'))$, a contradiction.
\end{proof*}

We now construct two new sequences $\{j_k\}_{k=1}^{K'+K-K''}$ and $\{b_k\}_{k=1}^{K'+K-K''}$ using 
the sequences $\{i_k\}_{k=1}^K$ and $\{a_k\}_{k=1}^K$.
\begin{figure}[!hbt]
\centering
\includegraphics[width=0.9\linewidth]{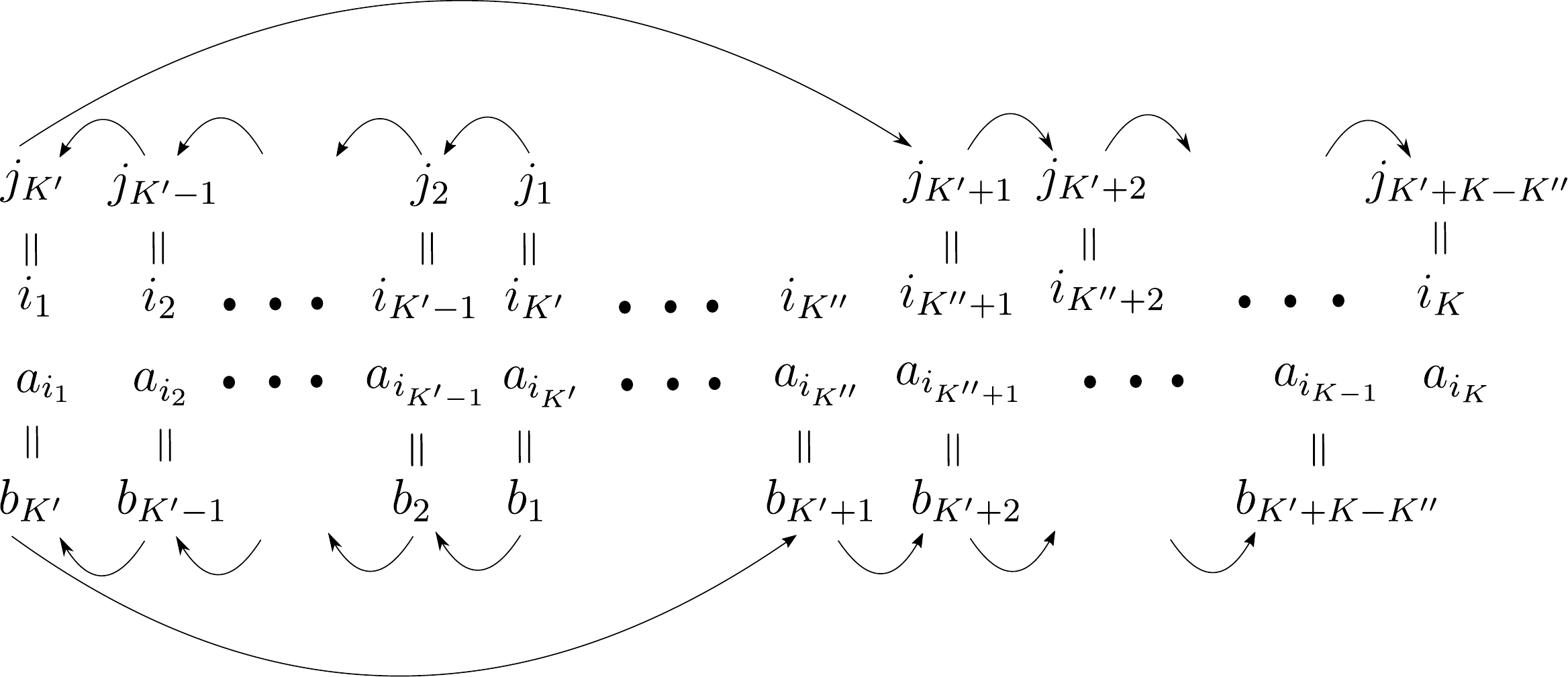}
\caption{Illustration of $\{j_k\}_{k=1}^{K'+K-K''}$ and $\{b_k\}_{k=1}^{K'+K-K''}$.}
\label{fig:Step8_3}
\end{figure}
Let $\{j_k\}_{k=1}^{K'+K-K''}$ and $\{b_k\}_{k=1}^{K'+K-K''}$ be such that for each $k\in \{1,\dots ,K'+K-K''\}$,
\begin{equation*}
j_k=
\begin{cases}
i_{K'+1-k} & \text{if }k\le  K', \\
i_{K''+k-K'} & \text{if }k>K',
\end{cases}
\text{ and }b_k=
\begin{cases}
a_{i_{K'+1-k}} & \text{if }k\le  K', \\
a_{i_{K''+k-K'-1}} & \text{if }k> K'.
\end{cases}%
\end{equation*}

Figure \ref{fig:Step8_3} is an illustration of $\{j_k\}_{k=1}^{K'+K-K''}$ and $\{b_k\}_{k=1}^{K'+K-K''}$.
\\
\\
\textsc{Claim 5:}
\textit{$\{j_1,\dots ,j_{K'+K-K"}\}\subseteq N\setminus N(m')$ and $(\{j_k\}_{k=1}^{K'+K-K''}, \{b_k\}_{k=1}^{K'+K-K''})$ satisfies the following conditions.\\
(a) $b_1\in M_+(m')\setminus M_{++}(m')$,\\
(b) for each $k\in \{1,\dots ,K'+K-K''-1\}$, $\{b_k,b_{k+1}\}\subseteq
D(R_{j_k},p(m'))$, and\\
(c) $b_{K'+K-K''}\in D(R_{i_{K'+K-K''}},p(m'))$. }
\\
\\
\begin{proof*}
First, we show $\{j_1,\dots ,j_{K'+K-K"}\}\subseteq N\setminus N(m')$.
Let $k\in \{1,\dots ,K'\}$.
By $K'+1-k\le  K'$, $j_k=i_{K'+1-k}\in\{i_1,\dots ,i_{K'}\}$.
By the definition of $K'$, $\{i_1,\dots ,i_{K'}\}\cap N(m')=\emptyset $.
Thus, $j_k\notin N(m')$.

Next, let $k\in\{K'+1,\dots, K'+K-K''\}$.
By $K''+k-K'>K''$, $j_k=i_{K''+k-K'}\in \{i_{K''+1},\dots ,i_K\}$.
By Claim 2, $j_k\notin N(m')$.
Hence, $\{j_1,\dots ,j_{K'+K-K''}\}\subseteq N\setminus N(m')$. \\

\noindent {\sl Proof of (a)}:\enspace  By \eqref{Step 8 i_{K'}=i}, $b_1=a_{i_{K'}}=a_i$.
Thus, by $a_i\in M_+(m')$ and Claim 4, $b_1\in M_+(m')=M_+(m')\setminus M_{++}(m')$. \\

\noindent {\sl Proof of (b)}:\enspace 
Let $k\in \{1,\dots ,K'+K-K''-1\}$.
There are three cases.
\\
\\
{\sc Case 1:} $k\le  K'-1$.

We have $j_k=i_{K'+1-k}$ and $\{b_k,b_{k+1}\}=\{a_{i_{K'+1-k}},a_{i_{K'-k}}\}$.
Thus, by \eqref{Step 8 dc},
\begin{equation*}
\{b_k,b_{k+1}\}=\{a_{i_{K'+1-k}},a_{i_{K'-k}}\}\subseteq D(R_{i_{K'+1-k}},p(m'))=D(R_{j_k},p(m')).
\end{equation*}
{\sc Case 2:} $k=K'$.

By \eqref{Step 8 i_1}, $j_{K'}=i_1=m'+1$.
Thus, $b_{K'}=a_{i_1}=a_{m'+1}$.
By $a_{m'+1}\in D(R_{m'+1},p(m'))$, $b_{K'}\in D(R_{j_{K'}},p(m'))$.
By $a_{m'+1}(R^{m'})=x_{m'+1}$, $t_{m'+1}(R^{m'})\ge  p_{m'+1}(m')$, and Fact~\ref{domination},
$$
(x_{m'+1},p_{m'+1}(m'))\,R_{m'+1}\,f_{m'+1}(R^{m'})\,R_{m'+1}\,(a_{m'+1},p_{a_{m'+1}}(m')).
$$
Thus, by $a_{m'+1}\in D(R_{m'+1},p(m'))$, $x_{m'+1}\in D(R_{m'+1},p(m'))$.
By $b_{K'+1}=a_{i_{K''}}=a_{m'}=x_{m'+1}$, $b_{K'+1}\in D(R_{j_{K'}},p(m'))$.
\\
\\
{\sc Case 3:} $k\ge  K'+1$.

We have $j_k=i_{K''+k-K'}$ and $\{b_k,b_{k+1}\}=\{a_{i_{K''+k-K'-1}},a_{i_{K''+k-K'}}\}$.
Thus, by \eqref{Step 8 dc},
\begin{equation*}
\{b_k,b_{k+1}\}=\{a_{i_{K''+k-K'-1}},a_{i_{K''+k-K'}}\}\subseteq D(R_{i_{K''+k-K'}},p(m'))=D(R_{j_k},p(m')).
\end{equation*}
{\sl Proof of (c)}:\enspace 
Note that $j_{K'+K-K''}=i_K$ and $b_{K'+K-K''}=a_{i_{K-1}}$.
Thus, by \eqref{Step 8 dc},
\begin{equation*}
b_{K'+K-K''}=a_{i_{K-1}}\in D(R_{i_K},p(m'))=D(R_{j_{K'+K-K''}},p(m')).
\end{equation*}\end{proof*}

By Claim 5 and Step~\ref{property of IRIC sequence}, $0\notin D(R_{j_{K'+K-K"}},p(m'))$.
However, by \eqref{Step 8 i_K}, $a_{j_{K'+K-K"}}=a_{i_K}=0$.
This contradicts $((a_j,t_j))_{j\in N}\in Z^{\min}(R^{m'})$, which completes the proof of step~\ref{p^{m'+1}>t_{m'+1}}.
\hfill $\blacksquare{}$ \vspace{12pt} 

For the next step, it is convenient to introduce the following notations: Let $M(0)=M_+(0)=\emptyset$.
\setcounter{equation}{0}
\begin{step}
\label{i^*}
Let $m'\in \{0,\dots ,m-1\}$.
Suppose that $M_+(m')\neq \emptyset $ or $p_{m'+1}(m')>t_{m'+1}(R^{m'})$.
Then, there is $i^*\in N\setminus N(m'+1)$ such that the following two conditions hold:\\
(a) If $p_{m'+1}(m')>t_{m'+1}(R^{m'})$, then
\begin{equation*}
D(R_{i^*},p(m'))\cap (M_+(m')\cup \{x_{m'+1}\})\neq \emptyset.
\end{equation*}%
Else, 
\begin{equation*}
D(R_{i^*},p(m'))\cap M_+(m')\neq \emptyset.
\end{equation*}
(b) If $m'\le  m-2$, then there is a set $
\{j_{m'+2},\dots ,j_m\}\subseteq N\setminus (N(m'+1)\cup \{i^*\})$ of $m-(m'+1)$ distinct
agents such that for each $k\in \{m'+2,\dots ,m\}$, $
x_k\in D(R_{j_k},p(m'))$.
\end{step}
\noindent {\bf \sl Proof of Step~\ref{i^*}\/}:\enspace 
Let $((a_i,t_i))_{i\in N}\in Z^{\min}(R^{m'})$.
For agent $m'+1$, by Fact~\ref{path}, there is a sequence $\{i_k\}_{k=1}^K$ of $K$ distinct agents such that\footnote{If $a_{m'+1}=0$, then $\{i_k\}_{k=1}^K=\{m'+1\}$ and thus, the latter part of \eqref{Step 9 i_K} and \eqref{Step 9 dc} vacuously hold.}
\begin{align}
& i_1=m'+1,  \label{Step 9 i_1}\\
&a_{i_K}=0\text{ and } \text{for each }k\in \{1,\dots, K-1\},\ a_{i_k}\ne 0\text{ and}   \label{Step 9 i_K}\\
& \text{for each }k\in \{2,\dots ,K\},\{a_{i_{k-1}},a_{i_k}\}\subseteq
D(R_{i_k}^{m'},p(m')).  \label{Step 9 dc}
\end{align}%
\textsc{Claim 1:} \emph{Let }$x_j\in M(m')\cap \{a_{i_1},\dots
,a_{i_K}\}$\emph{. Then, }$x_j\in M_+(m')$\emph{. }\\
\\
\begin{proof*}
Suppose for contradiction that $x_j\notin M_+(m')$.
Then, by Step~\ref{N(m')'s demand}~(ii) (Unique demand property), $D(R_j',p(m'))=\{x_j\}$.
Thus, $a_j=x_j$.
Therefore, by $x_j\in \{a_{i_1},\dots ,a_{i_K}\}$, $j\in \{i_1,\dots, i_K\}$.

If $j=i_1$, then by \eqref{Step 9 i_1}, $j=m'+1$, which contradicts $j\in \{1,\dots, m'\}$.
Thus $j=i_k$ for some $k\in \{2,\dots, K\}$.
However, by \eqref{Step 9 dc},
$$\{a_{i_{k-1}},a_{i_k}\}\subseteq D(R^{m'}_{i_k},p(m'))=D(R_j',p(m'))=\{x_j\},$$
a contradiction.
\end{proof*}
%
%
\\

Note that by $|N(m')|<|M(m'+1)|$ and {\sl no wastage}, there is $
i\in N\setminus N(m')$ such that $a_i\in M(m'+1)$. By
Step~\ref{N(m')'s demand}~(i) (Twin demand property),
\begin{equation}
\{a_1,\dots ,a_{m'},a_i\}=M(m'+1).  \label{Step 9 N(m') and i}
\end{equation}

Note that by Step~\ref{N(m')'s demand}~(ii) (Unique demand property), for each $x_j\in M(m')\setminus
M_+(m')$,  $a_j=x_j$.
Thus,
\begin{equation}
a_i\in M_+(m')\cup \{x_{m'+1}\}.  \label{Step 9 i's object}
\end{equation}

When $m'\le  m-2$, we define $\{j_{m'+2},\dots ,j_m\}\subseteq N$ as follows:
First note that by $((a_i,t_i))_{i\in N}\in Z^{\min}(R^{m'})$, for each $k\in \{m'+2,\dots
,m\}$, there is $i(k)\in N$ such
that $a_{i(k)}=x_k$. For each $k\in \{m'+2,\dots ,m\}$, let%
\begin{align*}
j_{k}\equiv
\begin{cases}
i_{K^{\prime }+1}&\text{if }i(k)=i_{K^{\prime }}\text{ for some }%
K^{\prime }\in \{1,\dots ,K-1\}, \\
i(k)&\text{otherwise.}
\end{cases}
\end{align*}
Note that by $a_i\in M(m'+1)$
\begin{align}
\label{Step 9 inej}
i\notin \{i(m'+2),\dots ,i(m)\}.
\end{align}

In the following claims, we show that the agents $j_{m'+2},\dots, j_m$ are distinct, $\{j_{m'+2},\dots, j_m\}\subseteq N\setminus N(m'+1)$, and for each $k\in \{m'+2, \dots, m\}$, $x_k\in D(R_{j_k}, p(m'))$.
\\
\\
\textsc{Claim 2.}
\textit{Suppose $m'\le  m-2$.
Let $k,\ell \in \{m'+2,\dots, m\}$ be such that $k\neq \ell$.
Then, $j_k\neq j_{\ell}$.}
\\
\\
\textsl{Proof: }
First suppose $i(k), i(\ell)\notin \{i_1,\dots, i_{K-1}\}$.
Then, $j_k=i(k)$ and $j_{\ell}=i(\ell)$.
By $i(k)\neq i(\ell)$, $j_k\neq j_{\ell}$.

Next, suppose $i(k)\notin  \{i_1,\dots, i_{K-1}\}$ and $ i(\ell)\in \{i_1,\dots, i_{K-1}\}$.
Then, $j_k=i(k)$, and there is $K'\in \{1,\dots, K-1\}$ such that $i(\ell)=i_{K'}$ and $j_{\ell}=i_{K'+1}$.
By $a_{i_K}=0$ and $a_{i(k)}=x_{i(k)}$, $j_k=i(k)\ne i_K$, which implies $j_k\notin \{i_1,\dots, i_K\}$.
By $j_{\ell}=i_{K'+1}$, $j_{\ell}\in \{i_1,\dots, i_K\}$, 
Thus, $j_k\neq j_{\ell}$.
The same argument holds for the converse case.

Finally, suppose $i(k),i(\ell)\in  \{i_1,\dots, i_{K-1}\}$.
There are $K',K''\in \{1,\dots, K-1\}$ such that $i_{K'}=i(k)$ and $j_k=i_{K'+1}$, and $i_{K''}=i(\ell)$ and $j_{\ell}=i_{K''+1}$.
By $i_{K'}=i(k)\neq i(\ell)=i_{K''}$, $K'\neq K''$.
This implies $K'+1\neq K''+1$.
Thus, $j_k=i_{K'+1}\neq i_{K''+1}=j_{\ell}$.
\qed
\\
\\
\noindent
\textsc{Claim 3.}
\textit{Suppose $m'\le  m-2$.
Let $k\in \{m'+2, \dots, m\}$. Then, $j_k\notin N(m'+1)$.}
\newline
\newline
\textsl{Proof: }
First, suppose $i(k)\notin \{i_1,\dots, i_{K-1}\}$.
Then $j_k=i(k)$.
By Step~\ref{N(m')'s demand}~(i) (Twin demand property) and $a_{i(k)}=x_k\notin M(m'+1)$, $j_k\notin N(m')$.
In addition, by \eqref{Step 9 i_1}, $i(k)\neq i_1= m'+1$.

Next, suppose $i(k)\in \{i_1,\dots, i_{K-1}\}$.
There is $K'\in \{1,\dots, K-1\}$ such that $i(k)=i_{K'}$ and $j_k=i_{K'+1}$.
By $i(k)=i_{K'}$, $a_{i_{K'}}=x_k$.
By \eqref{Step 9 dc}, $x_k=a_{i_{K'}}\in D(R^{m'}_{i_{K'+1}},p (m'))$.
Thus, by $x_k\notin M(m'+1)$ and Step~\ref{N(m')'s demand}~(i) (Twin demand property), $j_k=i_{K'+1}\notin N(m')$.
In addition, by $K'+1>1$ and \eqref{Step 9 i_1}, $j_k\neq i_1=m'+1$.
\qed
\\
\\
\textsc{Claim 4.}
\textit{Suppose $m'\le  m-2$.
Let $k\in \{m'+2, \dots, m\}$.
Then, $x_k\in D(R_{j_k},p(m'))$.}
\newline
\newline
\textsl{Proof: }
First, suppose $i(k)\notin \{i_1,\dots, i_{K-1}\}$.
Then $j_k=i(k)$.
By $a_{i(k)}=x_k$, $x_k\in D(R_{j_k},p(m'))$.

Next, suppose $i(k)\in \{i_1,\dots, i_{K-1}\}$.
There is $K'\in \{1,\dots, K-1\}$ such that $i(k)=i_{K'}$ and $j_k=i_{K'+1}$.
By $i(k)=i_{K'}$, $a_{i_{K'}}=x_k$.
By \eqref{Step 9 dc}, $x_k=a_{i_{K'}}\in D(R^{m'}_{i_{K'+1}},p (m'))=D(R_{j_k},p(m'))$.
\qed \\

We now complete the proof of this step by considering three disjoint cases. \\ \\
\textsc{Case 1:} \textit{$M(m'+1)\cap \{a_{i_1},\dots, a_{i_K}\}= \emptyset$.}

Let
$$i^*\equiv i.$$
By \eqref{Step 9 i's object} and $M(m'+1)\cap \{a_{i_1},\dots, a_{i_K}\}= \emptyset$,
\begin{align}
\label{Step 9 i notin}
i\notin \{i_1,\dots, i_K\}.
\end{align}
Thus, by \eqref{Step 9 i_1}, $i\neq i_1=m'+1$.
Thus, by $i\notin N(m')$, $i^*\in N\setminus N(m'+1)$.
\\
{\sl Proof of (a).}
By \eqref{Step 9 i's object} and $i^*=i$, $D(R_{i^*},p(m'))\cap (M_+(m')\cup \{x_{m'+1}\})\neq \emptyset$.
Thus, we are done when $p_{m'+1}(m')>t_{m'+1}(R^{m'})$.
Now, suppose $p_{m'+1}(m')\le  t_{m'+1}(R^{m'})$.
Since either $M_+(m')\neq \emptyset$ or $p_{m'+1}(m')>t_{m'+1}(R^{m'})$, we have $M_+(m')\neq \emptyset$.
By Steps \ref{x_ in M_+(m')} (Outside receiver I) and \ref{x_i in M_+(k)} (Outside receiver II), there is $k\in N\setminus N(m')$ such that $a_k\in M_+(m')$.
By \eqref{Step 9 N(m') and i}, $k=i$.
Thus, $a_i\in M_+(m')$.
Thus, by $i^*=i$, $D(R_{i^*},p(m'))\cap M_+(m')\neq \emptyset$.
\\
{\sl Proof of (b).}
Suppose $m'\le  m-2$.
Let $k\in\{m'+2,\dots ,m\}$.
By Claims 2, 3, and 4, we only need to show $j_k\neq i^*$.
First, suppose $i(k)\notin\{i_1,\dots ,i_K\}$.
Then, $j_k=i(k)$.
Thus, by \eqref{Step 9 inej}, $j_k=i(k)\neq i=i^*$.

Next, suppose $i(k)\in\{i_1,\dots ,i_K\}$.
There is $K'\in \{1,\dots, K-1\}$ such that $i(k)=i_{K'}$ and $j_k=i_{K'+1}$.
By $j_k\in \{i_1,\dots ,i_K\}$ and \eqref{Step 9 i notin}, $j_k\neq i=i^*$.
\\
\\
\textsc{Case 2:}
\textit{$M(m')\cap \{a_{i_1},\dots, a_{i_K}\}\neq \emptyset$ and $x_{m'+1}\notin \{a_{i_1},\dots, a_{i_K}\}$.}

By $M(m')\cap \{a_{i_1},\dots, a_{i_K}\}\neq \emptyset$ and \eqref{Step 9 i_K}, there is $K'\in \{1,\dots ,K-1\}$ such that
\begin{align}
& a_{i_{K'}}\in M(m')\text{ and }  \label{in M}\\
& \{a_{i_{K'+1}},\dots ,a_{i_K}\}\cap M(m')=\emptyset .
\end{align}

Let
$$i^*\equiv i_{K'+1}.$$
By $a_{i_{K'+1}}\notin M(m')$ and Step~\ref{N(m')'s demand}~(i) (Twin demand property), $i_{K'+1}\notin N(m'-1)$.
If $i_{K'+1}=m'$, then by $a_{i_{K'+1}}\notin M(m')$ and Step~\ref{N(m')'s demand}~(i) (Twin demand property), $a_{i_{K'+1}}=x_{m'+1}$, which is impossible since $x_{m'+1}\notin \{a_{i_1},\dots, a_{i_K}\}$.
Thus, $i_{K'+1}\neq m'$.
Finally, by $K'+1>1$ and \eqref{Step 9 i_1}, $i_{K'+1}\neq i_1=m'+1$.
Hence, $i^*=i_{K'+1}\in N\setminus N(m'+1)$.
\\
{\sl Proof of (a).}
By \eqref{Step 9 dc} and $i^*=i_{K'+1}$, $a_{i_{K'}}\in D(R^{m'}_{i_{K'+1}},p(m'))=D(R_{i^*},p(m'))$.
By Claim 1 and $a_{i_{K'}}\in M(m')$, $a_{i_{K'}}\in M_+(m')$.
Thus, $D(R_{i^*},p(m'))\cap M_+(m')\neq \emptyset$.
\\
{\sl Proof of (b).}
Suppose $m'\le  m-2$.
Let $k\in\{m'+2,\dots ,m\}$.
By Claims 2, 3, and 4, we only need to show $j_k\neq i^*$.
First, suppose $i(k)\notin\{i_1,\dots ,i_K\}$.
Then, $j_k=i(k)$.
By $i(k)\notin\{i_1,\dots ,i_K\}$ and $i^*=i_{K'+1}\in \{i_1,\dots ,i_K\}$, $j_k\neq i^*$.

Next, suppose $i(k)\in\{i_1,\dots ,i_K\}$.
There is $K^*\in \{1,\dots ,K-1\}$ such that $i(k)=i_{K^*}$ and $j_k=i_{K^*+1}$.
By $a_{i_{K'}}\in M(m')$ and $a_{i_{K^*}}=a_{i(k)}=x_k\notin M(m'+1)$, $K'\neq K^*$.
Thus, $j_k=i_{K^*+1}\neq i_{K'+1}=i^*$.
\\
\\
\textsc{Case 3:}
\textit{$x_{m'+1}\in \{a_{i_1},\dots, a_{i_K}\}$.}

Let $K'\in \{1,\dots, K\}$ be such that
$$a_{i_{K'}}=x_{m'+1}.$$
By \eqref{Step 9 i_K}, $K'<K$.
Thus, $i_{K'+1}$ exists.
Let
$$i^*\equiv i_{K'+1}.$$
We show $i^*\in N\setminus N(m'+1)$.
By $a_{i_{K'}}=x_{m'+1}$, \eqref{Step 9 dc}, and Step~\ref{N(m')'s demand}~(i) (Twin demand property), $i_{K'+1}\notin N(m'-1)$.
By $K'+1>1$ and \eqref{Step 9 i_1}, $i_{K'+1}\neq i_1=m'+1$.

Note that by \eqref{Step 9 N(m') and i}, \eqref{Step 9 i's object}, and Step~\ref{N(m')'s demand}~(i) (Twin demand property),
$$a_i=x_{m'+1}\text{ or }a_{m'}=x_{m'+1}.$$

Suppose $a_{m'}=x_{m'+1}$.
Then $i_{K'}=m'$ and thus $i_{K'+1}\neq m'$.
Suppose $a_i=x_{m'+1}$.
This implies $i\ne i^*$.
If $M_+(m')\neq \emptyset$, then Steps \ref{x_ in M_+(m')} (Outside receiver I) and \ref{x_i in M_+(k)} (Outside receiver II) imply that $a_j\in M_+(m')$ for some $j\in N\setminus N(m')$, and by \eqref{Step 9 N(m') and i}, $j=i$, contradicting $a_i=x_{m'+1}$.
Thus, 
$M_+(m')=\emptyset$.
Then, by Claim 1, $a_{i_{K'+1}}\notin M(m')$.
Moreover, by $a_{i_{K'}}=x_{m'+1}$, $a_{i_{K'+1}}\neq x_{m'+1}$.
Thus, by Step~\ref{N(m')'s demand}~(i) (Twin demand property), $i_{K'+1}\neq m'$.
Hence, $i^*=i_{K'+1}\in N\setminus N(m'+1)$.
\\
{\sl Proof of (a).}
First, we show that $p_{m'+1}(m')>t_{m'+1}(R^{m'})$.
Suppose $M_+(m')\neq \emptyset$.
Then, as we have seen in the above paragraph, we can show that $a_i\ne x_{m'+1}$.
Since either $a_i=x_{m'+1}$ or $a_{m'}=x_{m'+1}$, $a_{m'}=x_{m'+1}$.
By $x_{m'+1}\in \{a_{i_1},\dots, a_{i_K}\}$, $m'\in \{i_1,\dots, i_K\}$.
Then, by Step~\ref{p^{m'+1}>t_{m'+1}}, $p_{m'+1}(m')>t_{m'+1}(R^{m'})$.
Since $M_+(m')\neq \emptyset$ or $p_{m'+1}(m')>t_{m'+1}(R^{m'})$, we can conclude that $p_{m'+1}(m')>t_{m'+1}(R^{m'})$.

By $a_{i_{K'}}=x_{m'+1}$ and \eqref{Step 9 dc}, $x_{m'+1}\in D(R^{m'}_{i_{K'+1}},p(m'))=D(R_{i^*}, p(m'))$.
Thus, $D(R_{i^*},p(m'))\cap (M_+(m')\cup \{x_{m'+1}\})\neq \emptyset$.
\\
{\sl Proof of (b).}
Suppose $m'\le  m-2$.
Let $k\in\{m'+2,\dots ,m\}$.
By Claims 2, 3, and 4, we only need to show $j_k\neq i^*$.
First, suppose $i(k)\notin\{i_1,\dots ,i_K\}$.
Then, $j_k=i(k)$.
By $i(k)\notin\{i_1,\dots ,i_K\}$ and $i^*=i_{K'+1}\in \{i_1,\dots ,i_K\}$, $j_k\neq i^*$.

Next, suppose $i(k)\in\{i_1,\dots ,i_K\}$.
There is $K^*\in \{1,\dots ,K-1\}$ such that $i(k)=i_{K^*}$ and $j_k=i_{K^*+1}$.
By $a_{i_{K^*}}=a_{i(k)}=x_k\neq x_{m'+1}=a_{i_{K'}}$, $K^*\neq K'$.
Thus, $j_k=i_{K^*+1}\neq i_{K'+1}=i^*$.
\hfill $\blacksquare{}$ \vspace{12pt} 
\setcounter{equation}{0}
\begin{step}
\label{f_j P_i 0}
There are $i\in N\setminus N(m)$ and $j\in
\{1,\dots ,m\}$ such that $f_j(R^{j-1})\,P_i\,(0,0)$.
\end{step}
\noindent {\bf \sl Proof of Step~\ref{f_j P_i 0}\/}:\enspace 
The proof consists of two substeps. \\
\\
\textsc{Substep 10-1} \textit{Let $m'\in \{0,1,\dots ,m-2\}$ be
such that $M_+(m')\neq \emptyset $ or $p_{m'+1}(m')>t_{m'+1}(R^{m'})$. Then, $
M_+(m'+1)\neq \emptyset $.
} \\
\\
\begin{proof}
Suppose for contradiction that $M_+(m'+1)=\emptyset.$
By $M_+(m')\neq \emptyset $ or $p_{m'+1}(m')>t_{m'+1}(R^{m'})$, Step~\ref{i^*}~(a) implies that there is $i^*\in N\setminus N(m'+1)$ such that if $p_{m'+1}(m')>t_{m'+1}(R^{m'})$, then
\begin{equation}
\label{Step 10-1 x_{m'+1}}
D(R_{i^*},p(m'))\cap (M_+(m')\cup \{x_{m'+1}\})\neq \emptyset ,
\end{equation}
else,
\begin{equation}
\label{Step 10-1 M_+(m')}
D(R_{i^*},p(m'))\cap M_+(m')\neq \emptyset .
\end{equation}
Moreover, by Step~\ref{i^*}~(b) and $m'\le m-2$, there is a set $\{j_{m'+2},\dots ,j_m\}\subseteq N\setminus (N(m'+1)\cup\{i^*\})$ of $m-(m'+1)$
distinct agents such that for each $k\in \{m'+2,\dots ,m\}$, $x_k\in D(R_{j_k},p(m'))$.

Let $((a_i,t_i))_{i\in N}\in Z^{\min}(R^{m'+1})$.
Note that by $M_+(m'+1)=\emptyset$ and Step~\ref{N(m')'s demand}~(ii) (Unique demand property),
\begin{equation}
\label{Step 10-1 x_i=x_i}
a_i=x_i\text{ for each }i\in N(m'+1).
\end{equation}
\textsc{Claim 1:} \textit{Let $i\in N\setminus N(m'+1)$.
Suppose that there is $x_k\in M$ such that $x_k\in D(R_i,p(m'))$ and $p_k(m'+1)<p_k(m')$.
Then, $p_{a_i}(m'+1)<p_{a_i}(m')$.} \\
\\
\begin{figure}[!hbt]
\centering
\includegraphics[width=0.7\linewidth]{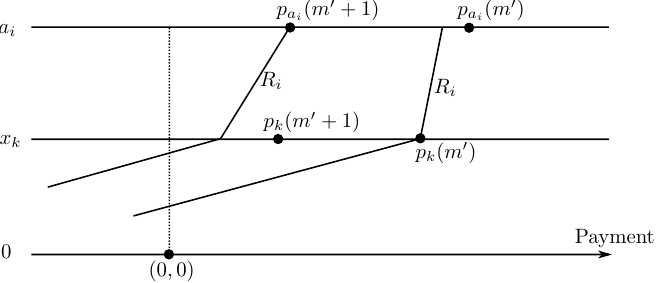}
\caption{Illustration of the proof of Claim~1.}
\label{fig:Step10_1}
\end{figure}
\\
\begin{proof*}
(See Figure~\ref{fig:Step10_1} for illustration.)
Note that
\begin{equation*}
(a_i,p_{a_i}(m'+1))~R_i~(x_k,p_k(m'+1))~P_i~(x_k,p_k(m'))~R_i~(a_i,p_{a_i}(m')),
\end{equation*}
where the first relation follows from $a_i\in D(R_i,p(m'+1))$, the second from $p_k(m'+1)<p_k(m')$, and the last from $x_k\in D(R_i,p(m'))$.
This implies $p_{a_i}(m'+1)<p_{a_i}(m')$. \end{proof*}

\noindent
\textsc{Claim 2:} \textit{For each $K\in \mathbb{N}$, there exists a set $
N_K\equiv \{i_1,\dots ,i_K\}\subseteq N\setminus N(m'+1)$ of $
K$ distinct agents such that $i^*\in N_K$ and for each $a\in \{a_{i_1},\dots ,a_{i_K}\}$, $p_a(m'+1)<p_a(m')$.
}\\
\\
\begin{proof*} The proof is by induction. \\
\textit{Induction base.}
Let $K=1$ and $N_1=\{i^*\}$. By $i^*\in N\setminus N(m'+1)$, $N_1\subseteq N\setminus N(m'+1)$.
Note that if we show that there is $a\in D(R_{i^*},p(m'))$ such that $p_a(m'+1)<p_a(m')$, then Claim~1 implies that $p_{a_{i^*}}(m'+1)<p_{a_{i^*}}(m')$.

First, suppose that there is $x_k\in M_+(m')$ such that $x_k\in D(R_{i^*},p(m'))$.
By $M_+(m'+1)=\emptyset$, $p_k(m'+1)\le  t_k(R^{k-1})$.
By $x_k\in M_+(m')$, $p_k(m')>t_k(R^{k-1})$.
Thus, $p_k(m'+1)<p_k(m')$.

Next, suppose that $D(R_{i^*},p(m'))\cap M_+(m')=\emptyset $.
By \eqref{Step 10-1 M_+(m')} we have  $p_{m'+1}(m')>t_{m'+1}(R^{m'})$.
Then, by \eqref{Step 10-1 x_{m'+1}}, $x_{m'+1}\in D(R_{i^*},p(m'))$.
By $M_+(m'+1)=\emptyset$, $p_{m'+1}(m'+1)\le  t_{m'+1}(R^{m'})$.
Thus, $p_{m'+1}(m'+1)<p_{m'+1}(m')$.
\\
\textit{Induction argument.}
Let $K\ge 1$ and assume that there is a set $N_K=\{i_1,\dots ,i_K\}\subseteq N\setminus N(m'+1)$ of $K$
distinct agents such that
\begin{align}
\label{Step 10-1 i^* in N_K}
& i^*\in N_K,  \text{ and}  \\
\label{Step 10-1 p_a(m'+1)<p_a(m')}
& \text{for each }a\in \{a_{i_1},\dots ,a_{i_K}\},\ p_a(m'+1)<p_a(m').
\end{align}

By \eqref{Step 10-1 x_i=x_i} and $N_K\subseteq N\setminus N(m'+1)$, $\{a_{i_1},\dots ,a_{i_K}\}\subseteq M\setminus M(m'+1) $.
Without loss of generality, assume that $\{a_{i_1},\dots,a_{i_K}\}=\{x_{m'+2},\dots, x_{m'+K+1}\}$.
By $i^*\in N_K$ and $i^*\notin \{j_{m'+2},\dots ,j_{m'+K+1}\}$, there is $j_k\in \{j_{m'+2},\dots ,j_{m'+K+1}\}$ such that $j_k\notin N_K$.

Let $N_{K+1}\equiv N_K\cup \{j_k\}$.
By $N_K\subseteq N\setminus N(m'+1)$ and $j_k\notin N(m'+1)$, $N_{K+1}\subseteq N\setminus N(m'+1)$.
By \eqref{Step 10-1 i^* in N_K}, $i^*\in N_K\subseteq N_{K+1}$.
Note that by $j_k\in \{j_{m'+2},\dots ,j_{m'+K+1}\}$, $x_k\in \{x_{m'+2},\dots x_{m'+K+1}\}=\{a_{i_1},\dots ,a_{i_K}\}$.
Thus, by \eqref{Step 10-1 p_a(m'+1)<p_a(m')}, $p_k(m'+1)<p_k(m')$.
Thus, by $x_k\in D(R_{j_k},p(m'))$, Claim 1 implies that $p_{a_{j_k}}(m'+1)<p_{a_{j_k}}(m')$.
This completes the proof of Claim~2.
\end{proof*}

Claim 2 completes the proof of Substep 10-1 because $N$ is finite. 
\end{proof}
\\
\textsc{Substep 10-2 } \textit{Completing the proof of Step 10.} \\

Remember that $p_1(0)>t_1(R^{0})$. Thus, by using Substep 10-1 repeatedly, we have $M_+(m-1)\neq \emptyset $.
Then, by Step~\ref{i^*}~(a), there is $i\in N\setminus N(m) $ such that if $p_m(m-1)>t_m(R^{m-1}) $, then
\begin{equation}
\label{Step 10-2 x_m}
D(R_i,p(m-1))\cap (M_+(m-1)\cup \{x_m\})\neq \emptyset ,
\end{equation}%
else,
\begin{equation}
\label{Step 10-2 M_+(m-1)}
D(R_i,p(m-1))\cap M_+(m-1)\neq \emptyset .
\end{equation}

First, suppose that there is $x_j\in M_+(m-1)$ such that $x_j\in D(R_i,p(m-1))$.
By $x_j\in M_+(m-1)$, $p_j(m-1)>t_j(R^{j-1})$.
Thus, by $x_j\in D(R_i,p(m-1))$, $f_j(R^{j-1})\,P_i\,(x_j,p_j(m-1))~R_i~(0,0)$.

Next, suppose that $D(R_i,p(m-1))\cap M_+(m-1)=\emptyset $.
By \eqref{Step 10-2 x_m} and \eqref{Step 10-2 M_+(m-1)}, $p_m(m-1)>t_m(R^{m-1})$ and $x_m\in D(R_i,p(m-1))$.
Thus, $f_m(R^{m-1})\,P_i\,(x_m,p_m(m-1))~R_i~(0,0)$.
\hfill $\blacksquare{}$ \vspace{12pt} 
\setcounter{equation}{0}
\begin{step}Completing the proof.
\end{step}

First, we show the following. \\
\\
\textsc{Claim:} \textit{Let $i\in N(m)$. Then, $a_i(R^{m})\in\{x_i,x_{i+1}\}$.} \\
\\
\begin{proof*} Suppose for contradiction that $a_i(R^{m})\notin\{x_i,x_{i+1}\}$. Then,
\begin{align}
V^{R'_i}(x_{i+1},f_i(R^{m}))  &\ge   V^{R'_i}(x_{i+1},(a_i(R^{m}),0))
\tag*{by \textsl{no subsidy}} \\
&> \overline{V}  \tag*{by ($i$-iii) in Step~\ref{pref}} \\
&> p_{i+1}(m).  \tag*{by Step~\ref{less than V}}
\end{align}%
This implies $(x_{i+1},p_{i+1}(m))\,P_i'\,f_i(R^{m})$, contradicting Fact~\ref{domination}. \end{proof*} \\

Since $R_m'$ is $f_m(R^{m-1})$-favoring,
Lemma~\ref{sp and fav} implies $f_m(R^{m})=f_m(R^{m-1})$.
Thus, by $a_m(R^{m-1})=x_m$, $a_m(R^{m})=x_m$. Therefore, by applying
Claim repeatedly, we obtain
\begin{equation*}
a_i(R^{m})=x_i\text{ for each }i\in N(m).
\end{equation*}%
Thus, by \textsl{individual rationality},
\begin{equation}
\label{Step 11 t_i leq V'_i}
t_i(R^{m})\le  V^{R'_i}(x_i,(0,0))\text{ for each }%
i\in N(m).
\end{equation}

By Step~\ref{f_j P_i 0}, there are $i\in N\setminus N(m)$ and $j\in \{1,\dots ,m\}$ such that $f_j(R^{j-1})\,P_i\,(0,0)$.
Note that this implies that $(\{i\}, \{x_j\})$ is an IRIC from $f_j(R^{j-1})$. 
By \eqref{Step 11 t_i leq V'_i} and ($j$-ii) in Step~\ref{pref}, $t_j(R^{m})\le  V^{R'_j}(x_j,(0,0))<d(f_j(R^{j-1}))$.
Therefore, $(\{i\}, \{x_j\})$ is an IRIC from $f_j(R^{m})$, and thus, $f_j(R^{m})\,P_i\,(0,0)$

Let $R_j''\equiv R_i$. By \textsl{strategy-proofness} and
$f_j(R^{m})\,P_i\,(0,0)$, $f_j(R''_j,R_{-j}^{m})~R_j''~f_j(R^{m})~P''_j~(0,0)$. By \textsl{equal treatment of equals} and $
R_i=R_j''$, $f_i(R_j'',R_{-j}^{m})\,I_i\,f_j(R_j'',R_{-j}^{m})\,P_i\,%
(0,0)$. Thus, by Lemma~\ref{zero payment},
\begin{equation*}
a_j(R_j'',R_{-j}^{m})\neq 0\text{ and }%
a_i(R_j'',R_{-j}^{m})\neq 0.
\end{equation*}

By $a_i(R_j'',R_{-j}^{m})\neq 0$ and $|N(m)|=m$, there is $k\in N(m)$ such that $a_k(R_j'',R_{-j}^{m})=0$.
By $a_j(R_j'',R_{-j}^{m})\neq 0$, $k\neq j$. By Lemma~\ref{zero payment}, $t_k(R_j'',R_{-j}^{m})=0$.
Let $p\equiv p^{\min }(R_j'',R_{-j}^{m})$.
Similarly to Step~\ref{less than V}, we can show $p_{x_{k+1}}<\bar{V}$.
By (k-$ii$), $p_{x_{k+1}}<\bar{V}<V^{R'_k}(x_{k+1},(0,0))$.
Thus, we have $(x_{k+1},p_{x_{k+1}})~P_k'~(0,0)=f_k(R''_j,R_{-j}^{m})$.
This contradicts Fact~\ref{domination} \hfill $\blacksquare $

\bibliographystyle{ecta}
\bibliography{order}

\end{document}